\theoremstyle{plain}
\newtheorem{theorem}{Theorem}[section]
\newtheorem{lemma}[theorem]{Lemma}
\newtheorem{corollary}[theorem]{Corollary}
\newtheorem{proposition}[theorem]{Proposition}
\theoremstyle{definition}
\newtheorem{remark}[theorem]{Remark}
\newcommand*{\cA}{\mathcal{A}}
\newcommand*{\cB}{\mathcal{B}}
\newcommand*{\cE}{\mathcal{E}}
\newcommand*{\cF}{\mathcal{F}}
\newcommand*{\cG}{\mathcal{G}}
\newcommand*{\cI}{\mathcal{I}}
\newcommand*{\cN}{\mathcal{N}}
\newcommand*{\cR}{\mathcal{R}}
\newcommand*{\cX}{\mathcal{X}}
\newcommand*{\cY}{\mathcal{Y}}
\newcommand*{\N}{\mathbb{N}}
\newcommand*{\R}{\mathbb{R}}
\newcommand*{\St}{\mathrm{S}}
\newcommand*{\eps}{\varepsilon}
\newcommand*{\diag}{\mathrm{diag}}
\newcommand*{\id}{\mathrm{id}}
\newcommand*{\supp}{\mathrm{supp}}
\newcommand*{\tr}{\mathrm{tr}}
\newcommand*{\ket}[1]{| #1 \rangle}
\newcommand*{\bra}[1]{\langle #1 |}
\newcommand{\proj}[1]{|#1\rangle\!\langle #1|}
\newcommand*{\Pos}{\mathrm{P}}
\newcommand*{\BB}{\mathrm{B}}
\newcommand*{\TPCP}{\mathrm{TPCP}}
\newcommand*{\CP}{\mathrm{CP}}
\newcommand{\norm}[1]{\left\lVert#1\right\rVert}
\newcommand*{\ox}{\otimes}
\title{
A chain rule for the quantum relative entropy}
\author[1]{Kun Fang}
\author[2]{Omar Fawzi}
\author[3]{Renato Renner}
\author[3]{David Sutter}
\affil[1]{\small{Department of Applied Mathematics and Theoretical Physics, University of Cambridge, UK}}
\affil[2]{\small{Laboratoire de l'Informatique du Parall\'elisme, ENS de Lyon, France}}
\affil[3]{\small{Institute for Theoretical Physics, ETH Zurich, Switzerland}}
\date{}
\begin{document}

\maketitle

\begin{abstract}
The chain rule for the classical relative entropy ensures that the relative entropy between probability distributions on multipartite systems can be decomposed into a sum of relative entropies of suitably chosen conditional distributions on the individual systems. Here, we prove a similar chain rule inequality for the quantum relative entropy in terms of channel relative entropies. The new chain rule allows us to solve an open problem in the context of asymptotic quantum channel discrimination: surprisingly, adaptive protocols cannot improve the error rate for asymmetric channel discrimination compared to non-adaptive strategies. In addition, we give examples of quantum channels showing that the channel relative entropy is not additive under the tensor product.
\end{abstract}

\section{Introduction}
The quantum \emph{relative entropy} between a state $\rho$ and a positive semidefinite operator $\sigma$ defined as
\begin{align*}
D(\rho \| \sigma):=\left \lbrace
\begin{array}{ll}
\tr\, \rho \,(\log \rho - \log \sigma) & \text{if } \supp(\rho) \subseteq \supp(\sigma) \\
+\infty & \text{otherwise} \, ,
\end{array} \right.
\end{align*}
is an important entropic quantity in physics. Many other information measures such as von Neumann entropies, conditional entropies, or mutual informations can be viewed as a relative entropy between suitably chosen operators. Hence it is crucial to understand the mathematical properties of this quantity. One aspect which remains poorly understood is how to decompose the relative entropy between states on multipartite systems into a sum of relative entropies between states on the individual systems. In the classical case, the well-known chain rule for relative entropy can be used~\cite[Theorem~2.5.3]{cover}. For a pair of discrete random variables $(X, Y)$ with alphabet $\cX \times \cY$, we have
\begin{align*}
D(P_{XY} \| Q_{XY}) = D(P_{X} \| Q_{X}) + \sum_{x \in \cX} P_{X}(x) D(P_{Y|X=x} \| Q_{Y|X=x}) \ ,
\end{align*}
where $P_{XY}$ and $Q_{XY}$ are joint probability distributions, but $Q_{XY}$ does not need to be normalized. No quantum analogue of such a chain rule is known, even if we relax the equality with the following inequalities
\begin{align}
D(P_{X} \| Q_{X}) + \min_{x \in \cX} D(P_{Y|X=x} \| Q_{Y|X=x}) 
&\leq D(P_{XY} \| Q_{XY}) \nonumber \\ 
&\leq D(P_{X} \| Q_{X}) + \max_{x \in \cX} D(P_{Y|X=x} \| Q_{Y|X=x}) \ .\label{eq_chainrule_classical}
\end{align}
In this manuscript, we prove a quantum version of the upper bound~\eqref{eq_chainrule_classical} and show that it is tight in the sense that there exist scenarios where the chain rule is an equality.

To model the quantum setting, the conditional distributions are replaced by trace-preserving completely positive maps $\cE$ and $\cF$ from $A$ to $B$ and the initial states are density operators $\rho_{RA}$ and $\sigma_{RA}$, where $R$ denotes a reference system. To express the very last term in~\eqref{eq_chainrule_classical} in the quantum mechanical case we use a quantity called (stabilized) \emph{channel relative entropy}, which is defined by
\begin{align} \label{eq_channel_rel_ent} 
D(\cE \| \cF) := \max_{\phi_{RA} \in \St(A \otimes A) } D\big( (\cI \otimes \cE)(\phi_{RA} ) \| (\cI \otimes \cF)(\phi_{RA} ) \big) \, ,
\end{align}
where $\St(A)$ denotes the set of density operators on $A$ and $\cI$ denotes the identity map.
Motivated by the classical case~\eqref{eq_chainrule_classical}, it is natural to ask whether the following chain rule is correct
\begin{align} \label{eq_hope}
D\big( (\cI \otimes \cE)(\rho_{RA}) \| (\cI \otimes \cF)(\sigma_{RA})\big)  \stackrel{?}{\leq}  D( \rho_{RA} \| \sigma_{RA}) + D(\cE \| \cF) \ .
\end{align}
It turns out that this does not hold in general as we show in Proposition~\ref{prop_notAdditive}. By regularizing the channel relative entropy term, the inequality however becomes valid, i.e., for $D^{\mathrm{reg}}(\cE \| \cF):=  \lim_{n \to \infty} \frac{1}{n}D(\cE^{\otimes n} \| \cF^{\otimes n})$ the inequality
\begin{align} \label{eq_resIntro}
D\big( (\cI \otimes \cE)(\rho_{RA}) \| (\cI \otimes \cF)(\sigma_{RA}) \big)  \leq  D( \rho_{RA} \| \sigma_{RA}) + D^{\mathrm{reg}}(\cE \| \cF)
\end{align}
is correct.
We refer to Theorem~\ref{thm_chainRule} and Corollary~\ref{corollary_collapse} for a more precise and a more general result. 

In Section~\ref{sec_chainRule} we discuss the limitations of a chain rule of the form~\eqref{eq_hope} and, based on smooth entropy calculus, derive the chain rule of the form~\eqref{eq_resIntro}. We also show that for any two trace-preserving completely positive maps $\cE$ and $\cF$ there exist states $\rho_{RA}$ and $\sigma_{RA}$ such that~\eqref{eq_resIntro} holds with equality. 
This answers an open question in the area of channel discrimination showing that, surprisingly, adaptive strategies cannot be more powerful than non-adaptive strategies~\cite{hayashi2009discrimination,berta18,Wang2019,banff19}. We comment on this in Section~\ref{sec_channelDiscrimination}.

\section{Preliminaries}
The set of positive semidefinite operators on $A$ is denoted by $\Pos(A)$, the set of density operators is given by $\St(A)$, and the set of positive semidefinite operators with trace at most one is denoted by $\St_{\leq}(A)$.
The set of completely positive and trace-preserving completely positive maps from linear operators on $A$ to linear operators on $B$ is denoted by $\CP(A,B)$ and $\TPCP(A,B)$, respectively. We drop identity maps if they are clear from the context. For example for $\rho_{RA} \in \St(R\otimes A)$ and $\cE \in \TPCP(A,B)$ we write $\cE(\rho_{RA})$ instead of $(\cI \otimes \cE)(\rho_{RA})$.
For $\cE \in \CP(A,B)$ we denote its \emph{Choi state} by $J^{\cE}_{RB} :=  \cE (\proj{\Omega}_{RA})$, where $\ket{\Omega}_{RA} = \sum_{i} \ket{i}_{R} \ket{i}_A$ is the unnormalized maximally entangled state.
For $\rho, \sigma \in \Pos(A)$ we define the \emph{fidelity} by $F(\rho,\sigma):=\|\sqrt{\rho} \sqrt{\sigma}\|_1^2$, where $\norm{X}_1:= \tr (X^\dagger X)^{1/2}$ is the trace norm of $X$.  
For $\rho, \sigma \in \St_{\leq}(A)$ the \emph{purified distance} is given by $P(\rho,\sigma):=\sqrt{1-F^{\star}(\rho,\sigma)}$, where $F^{\star}(\rho,\sigma):=(\norm{\sqrt{\rho} \sqrt{\sigma}}_1+\sqrt{(1-\tr\, \rho)(1-\tr\, \sigma)})^2$ is the \emph{generalized fidelity}~\cite{TCR10}.
For  $\rho \in \St_{\leq}(A)$ and  $\eps \in [0, \sqrt{\tr\, \rho})$ the $\eps$-ball around $\rho$ is given by $\BB_{\eps}(\rho):=\{\omega \in \St_{\leq} (A): P(\omega,\rho) \leq \eps\}$.
The \emph{max-relative entropy} for $\rho,\sigma \in \Pos(A)$ is defined as~\cite{renner_phd,datta09}
\begin{align*}
D_{\max}(\rho \| \sigma):= \inf \{\lambda \in \R : \rho \leq 2^{\lambda} \sigma\} \, .
\end{align*}
Its smooth version is given by $D_{\max}^{\eps}(\rho \| \sigma):= \inf_{\tilde \rho \in \BB_{\eps}(\rho)} D_{\max}(\tilde \rho \| \sigma)$. 

For $\cE \in \TPCP(A,B)$ and $\cF \in \CP(A,B)$ we defined the (stabilized) \emph{channel relative entropy} $D(\cE \| \cF) $ in~\eqref{eq_channel_rel_ent}.
Its non-stabilized counterpart is given by
\begin{align*}
\bar D(\cE \| \cF) := \max_{\phi_{A} \in \St(A) }D\big(  \cE(\phi_A) \| \cF(\phi_A) \big) \, .
\end{align*}
The regularized version is given by $\bar D^{\mathrm{reg}}(\cE \| \cF):= \lim_{n \to \infty} \frac{1}{n} \bar D(\cE^{\otimes n} \| \cF^{\otimes n})$.
By definition we have $\bar D(\cE \| \cF) \leq D(\cE \| \cF)$. The (stabilized) channel max-relative entropy accordingly is defined as
\begin{align*}
D_{\max}(\cE \| \cF) := \max_{\phi_{RA} \in \St(A \otimes A) } D_{\max}\big( \cE(\phi_{RA}) \|  \cF(\phi_{RA}) ) \, \, .
\end{align*}
The channel max-relative entropy can be expressed in a simple closed form as a function of the Choi states of the maps $\cE$ and $\cF$~\cite[Lemma~12]{berta18}. This implies that it is also additive under tensor products, i.e., $D_{\max}(\cE^{\otimes n} \| \cF^{\otimes n}) = n D_{\max}(\cE \| \cF)$ for all $n \in \N$.
Another quantity that is of interest is the \emph{amortized channel relative entropy}~\cite{berta18} defined by
\begin{align}  \label{eq_amortized}
D^{A}(\cE \| \cF) := \sup_{\phi_{RA}, \omega_{R A} \in \St(R\, \otimes A)} \{  D\big( \cE(\phi_{RA}) \| \cF(\omega_{RA})\big) -  D( \phi_{RA} \|\omega_{RA}) \} \, .
\end{align}
We note that unlike for the channel relative entropy defined in~\eqref{eq_channel_rel_ent}, where the reference system $R$ can be assumed to be isomorphic to the input system $A$, it is unclear if this assumption can made for the amortized channel relative entropy. Here we do not constrain the dimension of the $R$ system which makes the amortized channel divergence a more complicated quantity.

%

\section{Chain rule for the quantum relative entropy} \label{sec_chainRule}
In this section we prove our main result which is a chain rule for the relative entropy of completely positive maps. Furthermore, we comment on possible generalizations and limitations of such chain rules. Finally we discuss its implications, in particular for the task of asymptotic quantum channel discrimination.

\subsection{Non-additivity of the channel relative entropy}
As mentioned above the channel max-relative entropy is additive under tensor products. Here we show that this property is not true for the channel relative entropy.
This then implies that the naive guess for a possible chain rule for the relative entropy mentioned in~\eqref{eq_hope} is not correct.
\begin{proposition}[Channel relative entropy is not additive under tensor product] \label{prop_notAdditive}
There exist $\cE,\cF \in \TPCP(A,B)$ such that 
\begin{align} \label{eq_notAdditive}
D(\cE \otimes \cE \| \cF \otimes \cF) > 2 D(\cE \| \cF) \, .
\end{align}
This implies that there exist $\rho_{RA}, \sigma_{RA} \in \St(R \otimes A)$ for some finite-dimensional system $R$ such that
\begin{align} \label{eq_naiveChainRuleWrong}
D\big( \cE(\rho_{RA}) \| \cF(\sigma_{RA}) \big) > D( \rho_{RA} \| \sigma_{RA}) + D(\cE \| \cF) \ .
\end{align}
\end{proposition}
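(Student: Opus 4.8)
The plan is to prove the two parts in sequence, since the second statement follows from the first via the definition of the channel relative entropy. First I would focus on constructing an explicit pair of channels $\cE, \cF \in \TPCP(A,B)$ witnessing the strict superadditivity in~\eqref{eq_notAdditive}. The natural candidates are qubit channels where one can compute $D(\cE\|\cF)$ and $D(\cE\otimes\cE\|\cF\otimes\cF)$ reasonably explicitly. A clean choice is to take $\cF$ to be a replacer (constant) channel, say $\cF(\cdot) = \tr(\cdot)\,\gamma_B$ for a fixed state $\gamma_B$, so that $\cF(\phi_{RA}) = \phi_R \otimes \gamma_B$; this decouples the reference system and makes the relative entropy $D(\cE(\phi_{RA})\|\phi_R\otimes\gamma_B)$ tractable. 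With this choice, $D(\cE\|\cF)$ reduces to a maximization that ties into the mutual-information / Holevo structure of $\cE$, and the tensor power $\cF\otimes\cF$ is again a replacer channel, so that $D(\cE\otimes\cE\|\cF\otimes\cF)$ likewise has a clean form.

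The key computational step is then to exhibit strict superadditivity of the optimizing quantity. The main obstacle is that for a single copy the optimal input $\phi_{RA}$ can often be taken to be a pure maximally entangled state, whereas for two copies an \emph{entangled-across-copies} input $\phi_{RA\otimes RA}$ can do strictly better than a product of single-copy optimizers. This is exactly the phenomenon of superadditivity familiar from channel-capacity theory (e.g. the non-additivity of Holevo capacity / minimum output entropy), so I would try to borrow a channel known to exhibit such superadditivity, or construct one directly, and verify that feeding in a globally entangled optimizer beats twice the single-copy value. Concretely, I would lower-bound $D(\cE\otimes\cE\|\cF\otimes\cF)$ by plugging in a cleverly chosen correlated input state and upper-bound $2D(\cE\|\cF)$ by the (tractable) single-copy optimum, and check the gap is strictly positive. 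Numerically searching over low-dimensional channels to locate a witness, then pinning down a closed-form or rigorously verifiable instance, is the part I expect to require the most care.

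For the second statement~\eqref{eq_naiveChainRuleWrong}, I would derive it as a direct consequence of~\eqref{eq_notAdditive} by unpacking the definition~\eqref{eq_channel_rel_ent}. Applying that definition to the tensor-product channels gives
\begin{align*}
D(\cE\otimes\cE\|\cF\otimes\cF) = D\big((\cE\otimes\cE)(\phi^\star) \| (\cF\otimes\cF)(\phi^\star)\big)
\end{align*}
for some optimal input state $\phi^\star$ on the doubled input system. The idea is to set $\rho_{RA} := (\cI\otimes\cE)(\phi^\star)$-type data more carefully: I would take $\rho_{RA}$ and $\sigma_{RA}$ to be the images of $\phi^\star$ under one copy of the channels (i.e. treat the first factor's channel action together with one input as producing a state on a composite system that now plays the role of the reference $R$), so that applying the \emph{remaining} single channel $\cE$ versus $\cF$ reconstructs the full two-copy relative entropy on the left-hand side. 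Writing $D(\cE\|\cF)$ on the right via its own optimizer and using $D(\rho_{RA}\|\sigma_{RA}) \le D(\cE\otimes\cE\|\cF\otimes\cF) - D(\cE\|\cF)$, the strict inequality~\eqref{eq_notAdditive} forces the right-hand side to undershoot the left by a positive amount, giving~\eqref{eq_naiveChainRuleWrong}. The only subtlety here is making sure the reference system $R$ stays finite-dimensional, which it does because the optimal $\phi^\star$ lives on a finite-dimensional space and the absorbed channel action keeps all dimensions bounded.
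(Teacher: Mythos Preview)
Your plan for the implication~\eqref{eq_notAdditive}~$\Rightarrow$~\eqref{eq_naiveChainRuleWrong} is sound and essentially matches the paper's: once the two-copy relative entropy strictly exceeds twice the single-copy value, one absorbs one channel action into the state pair $(\rho_{RA},\sigma_{RA})$ and the remaining single channel witnesses~\eqref{eq_naiveChainRuleWrong}. The paper phrases this via the chain $D(\cE\|\cF)\leq D^{\mathrm{reg}}(\cE\|\cF)\leq D^{A}(\cE\|\cF)$ together with the definition~\eqref{eq_amortized} of the amortized quantity, but your direct unpacking is equivalent.

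The genuine gap is in your construction for~\eqref{eq_notAdditive}. Taking $\cF$ to be a replacer channel, $\cF(X)=(\tr X)\,\gamma_B$, is precisely the case in which the channel relative entropy \emph{is} additive under tensor products: this is recorded in Remark~\ref{rmk_singleLetter}(iii) (citing~\cite[Proposition~41]{berta18}), so no such pair can witness strict superadditivity. The intuition you invoke---that an input entangled across the two copies should beat the product of single-copy optimizers, by analogy with non-additivity of the Holevo quantity or the minimum output entropy---does not transfer here, because with a replacer $\cF$ the second argument of the relative entropy is always the product $\phi_R\otimes\gamma_B$, and the resulting optimization does single-letterize. The paper instead exhibits two \emph{non-replacer} qubit channels, namely generalized amplitude damping channels $\cE=\cA_{0.3,0}$ and $\cF=\cA_{0.5,0.9}$; it uses their Pauli-$Z$ covariance to reduce the single-copy optimization to a one-parameter family of diagonal inputs (yielding $D(\cE\|\cF)<0.92$), and then lower-bounds the two-copy value by plugging in a specific diagonal state on $R_1R_2$ to obtain $D(\cE\otimes\cE\|\cF\otimes\cF)\geq 1.9362>2\times 0.92$. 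To rescue your approach you would need a comparable explicit, verifiable example in which $\cF$ is not a replacer channel.
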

\begin{proof}
We start by proving that~\eqref{eq_notAdditive} implies~\eqref{eq_naiveChainRuleWrong}. It is known~\cite[Theorem 3 and 6]{Wang2019} (see Section~\ref{sec_channelDiscrimination} for more explanations) that
\begin{align}
	D(\cE\|\cF) \leq D^{\mathrm{reg}}(\cE\|\cF)  \leq D^{A}(\cE\|\cF)\, .
\end{align}
The statement~\eqref{eq_notAdditive} implies that the first inequality can be strict. By definition of the amortized channel relative entropy~\eqref{eq_amortized} this directly implies~\eqref{eq_naiveChainRuleWrong}.

It thus remains to prove~\eqref{eq_notAdditive}. To do so we construct an example of two trace-preserving completely positive maps $\cE$ and $\cF$ on qubits that satisfy~\eqref{eq_notAdditive}. 
Consider the generalized amplitude damping channel 
\begin{align*}
	\cA_{\gamma,\beta} (\rho) = \sum_{i=1}^4 A_i \rho A_i^\dagger, \quad \text{for } \gamma, \beta \in [0,1]
\end{align*}
with the Kraus operators
\begin{alignat*}{2}
	& A_1  = \sqrt{1-\beta} (\ket{0}\bra{0} + \sqrt{1-\gamma}\ket{1}\bra{1})\, , \quad && A_2 = \sqrt{\gamma(1-\beta)}\ket{0}\bra{1}\, ,\\
& A_3 = \sqrt{\beta} (\sqrt{1-\gamma}\ket{0}\bra{0}+\ket{1}\bra{1})\, , && A_4 = \sqrt{\gamma \beta}\ket{1}\bra{0} \, .
\end{alignat*}
For the two channels $\cE = \cA_{0.3,0}$ and $\cF = \cA_{0.5,0.9}$ their corresponding Choi matrices are given by
\begin{align*}
 	J_{RB}^{\cE} = \begin{pmatrix}
 		1 & 0 & 0 & \sqrt{0.7}\\
 		0 & 0 & 0 & 0\\
 		0 & 0 & 0.3 & 0\\
 		\sqrt{0.7} & 0 & 0 & 0.7
 	\end{pmatrix}\quad \text{and} \quad J_{RB}^{\cF} = 
 	\begin{pmatrix}
 		0.55 & 0 & 0 & \sqrt{0.5}\\
 		0 & 0.45 & 0 & 0\\
 		0 & 0 & 0.05 & 0\\
 		\sqrt{0.5} & 0 & 0 & 0.95
 	\end{pmatrix}.
 \end{align*}
 For an arbitrary density matrix $\rho \in \St(A)$ let $\ket{\phi}_{RA}=(\sqrt{\rho_R} \otimes \id_A) \ket{\Omega}_{RA}$ be its purification where $\ket{\Omega}_{RA}= \sum_i \ket{i}_{R} \ket{i}_A$ and where $R$ is isomorphic to $A$. Hence we find
\begin{align} \label{eq_kun1}
\cE(\phi_{RA}) 
=(\cI \otimes \cE)\Big((\sqrt{\rho_R} \otimes \id_A) \proj{\Omega}_{RA} (\sqrt{\rho_R} \otimes \id_A )\Big)
= \sqrt{\rho_R} J_{RB}^{\cE}\sqrt{\rho_R} \, .
\end{align}
Using~\eqref{eq_kun1} gives
 \begin{align*}
 	D(\cE\|\cF)  
	= \max_{\rho_R \in \St(R)} D\big(\sqrt{\rho_R} J_{RB}^{\cE} \sqrt{\rho_R}\|\sqrt{\rho_R} J_{RB}^{\cF} \sqrt{\rho_R}\big)
	 =\!\! \max_{\rho_R = \diag(p,1-p)} \!\!D\big(\sqrt{\rho_R} J_{RB}^{\cE} \sqrt{\rho_R}\|\sqrt{\rho_R} J_{RB}^{\cF} \sqrt{\rho_R}\big).
 \end{align*}
The final step follows since both $\cE$ and $\cF$ are covariant with respect to the Pauli-$Z$ operator. Thus it suffices to perform the maximization over input states with respect to the one-parameter family of states $\rho_R = p \ket{0}\bra{0} + (1-p)\ket{1}\bra{1}$ (see e.g.~\cite[Proposition II.4.]{leditzky2018approaches}). Using the \texttt{fminbnd} function in Matlab, we find $D(\cE\|\cF) = 0.9176$ for an optimizer $\rho_R = \diag(0.8355, 1-0.8355)$. 
 This can also be seen by plotting the value of the relative entropy over the interval $p \in [0,1]$ as shown in Figure~\ref{nonadditivity plots}.

\begin{figure}[H]
\centering
\begin{tikzpicture}
\begin{scope}[shift={(-4,0)},scale = 1.1]
\node at (0,0) {\includegraphics[height = 4.9cm]{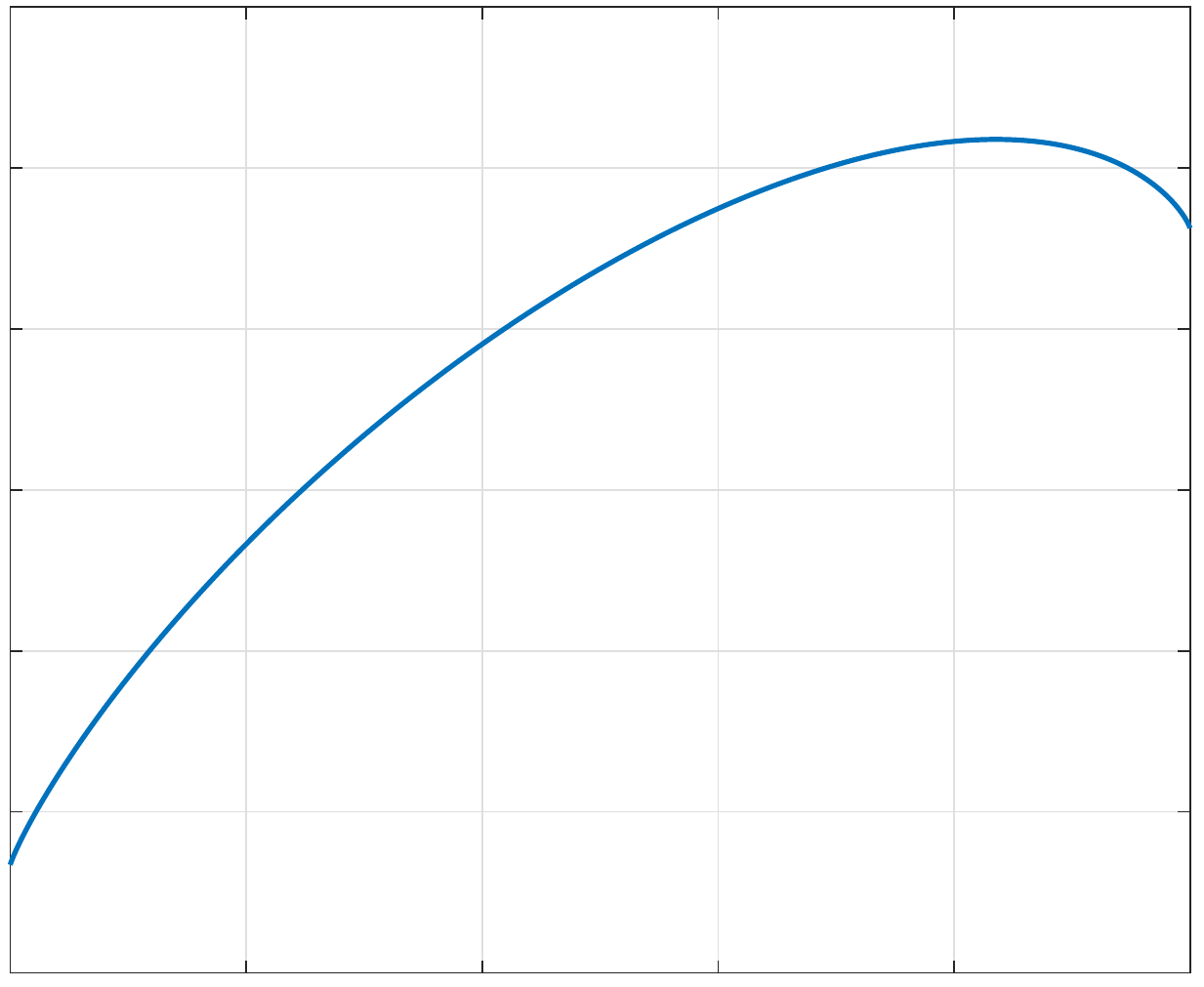}};
\node at (-2.65,-2.4) {\small $0$};
\node at (-1.55,-2.4) {\small $0.2$};
\node at (-0.5,-2.4) {\small $0.4$};
\node at (0.5,-2.4) {\small $0.6$};
\node at (1.55,-2.4) {\small $0.8$};
\node at (2.65,-2.4) {\small $1$};
\node at (-3,-2.1) {\small $0.4$};
\node at (-3,-1.4) {\small $0.5$};
\node at (-3,-0.7) {\small $0.6$};
\node at (-3,0) {\small $0.7$};
\node at (-3,0.7) {\small $0.8$};
\node at (-3,1.4) {\small $0.9$};
\node at (-3,2.1) {\small $1$};
\node at (1,-0.5) {\includegraphics[height = 1.8cm]{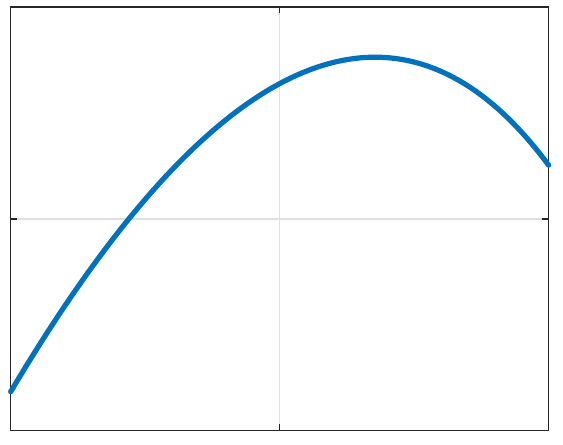}};
\node at (0.1,-1.5) {\footnotesize $0.7$};
\node at (1,-1.5) {\footnotesize $0.8$};
\node at (1.9,-1.5) {\footnotesize $0.9$};
\node at (-0.3,-1.2) {\footnotesize $0.9$};
\node at (-0.3,-0.5) {\footnotesize $0.91$};
\node at (-0.3,0.2) {\footnotesize $0.92$};

\node at (0,-2.8) {$p$};

\end{scope}
\begin{scope}[shift={(4,0)},scale=1.1]
\node at (0,0) {\includegraphics[height = 5cm]{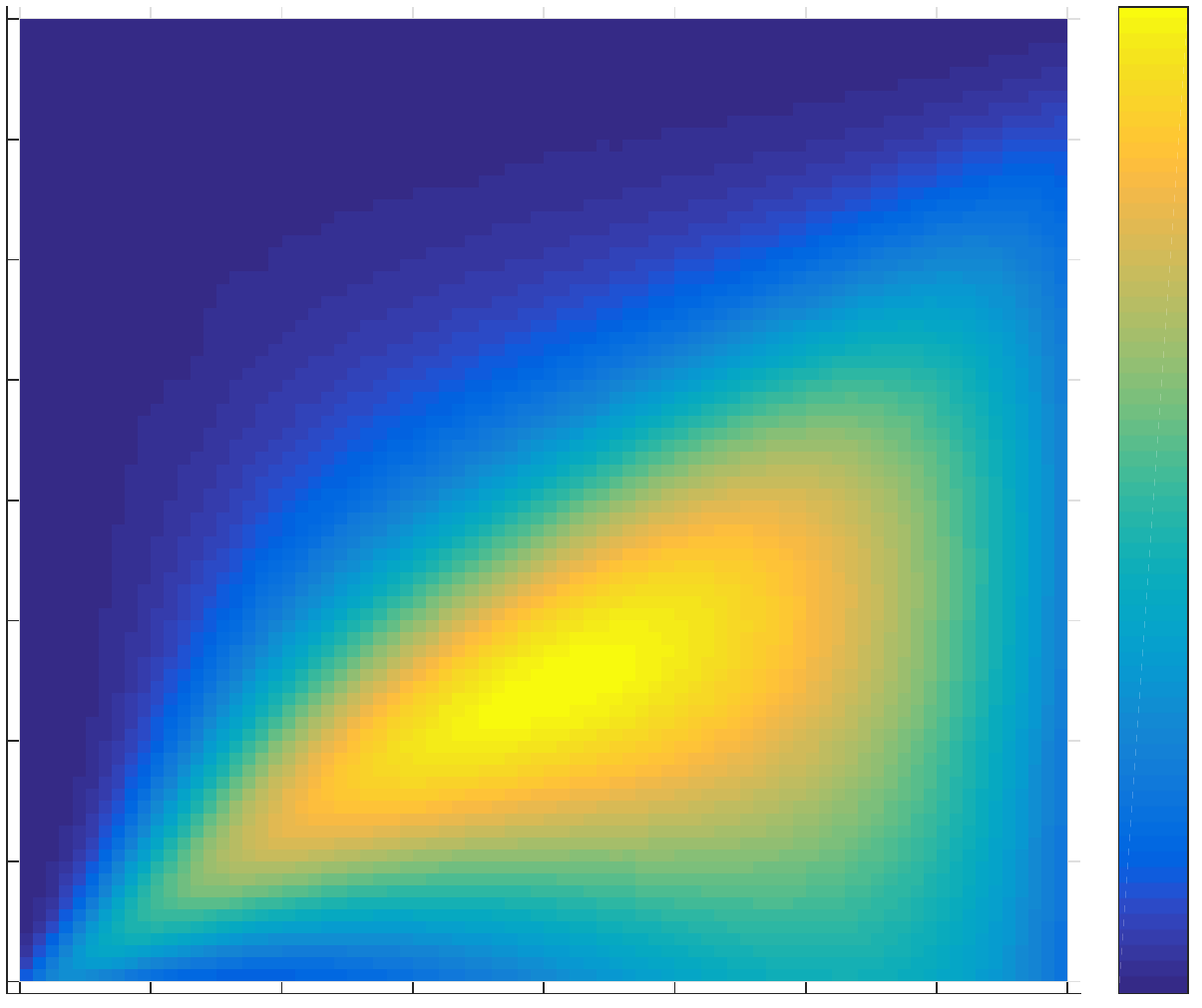}};
\node at (-2.6,-2.4) {\small $0.1$};
\node at (-2,-2.4) {\small $0.2$};
\node at (-1.4,-2.4) {\small $0.3$};
\node at (-0.8,-2.4) {\small $0.4$};
\node at (-0.2,-2.4) {\small $0.5$};
\node at (0.4,-2.4) {\small $0.6$};
\node at (1,-2.4) {\small $0.7$};
\node at (1.6,-2.4) {\small $0.8$};
\node at (2.2,-2.4) {\small $0.9$};

\node at (-3,-2.2) {\small $0.1$};
\node at (-3,-1.6) {\small $0.2$};
\node at (-3,-1.05) {\small $0.3$};
\node at (-3,-0.5) {\small $0.4$};
\node at (-3,0) {\small $0.5$};
\node at (-3,0.55) {\small $0.6$};
\node at (-3,1.05) {\small $0.7$};
\node at (-3,1.62) {\small $0.8$};
\node at (-3,2.15) {\small $0.9$};

\node at (3,-2.1) {\small $0$};
\node at (3,-1.4) {\small $0.02$};
\node at (3,-0.6) {\small $0.04$};
\node at (3,0.2) {\small $0.06$};
\node at (3,1) {\small $0.08$};
\node at (3,1.8) {\small $0.1$};

\node at (0,-2.8) {$\gamma_2$};
\node at (-3.5,0) {$\gamma_1$};
\end{scope}
\end{tikzpicture}
\caption{\small The left figure plots of the value $D(\sqrt{\rho_R} J_{RB}^{\cE} \sqrt{\rho_R}\|\sqrt{\rho_R} J_{RB}^{\cF} \sqrt{\rho_R})$ with respect to the input state $\rho_R = \diag(p,1-p)$. The subfigure is a zoom in plot with the parameter ranging from $0.7$ to $0.9$. It is evident that $D(\cE\|\cF)$ cannot be larger than $0.92$. The right figure shows a heat map of the value $D(\cA_{\gamma_1,0}^{\ox 2}\|\cA_{\gamma_2,0.9}^{\ox 2}) - 2 D(\cA_{\gamma_1,0}\|\cA_{\gamma_2,0.9})$ where $\gamma_1, \gamma_2 \in [0.1,0.9]$. This shows that the non-additivity behavior of the channel relative entropy under tensor products occurs for many channels.
}
\label{nonadditivity plots}
\end{figure}

On the other hand, if we choose the input state $\rho_{R_1R_2} = \diag(0.8,0,0,0.2)$ we have
\begin{align*}
D(\cE \otimes \cE \| \cF \otimes \cF)
	& \geq D\big(\sqrt{\rho_{R_1R_2}} (J_{RB}^{\cE})^{\ox 2} \sqrt{\rho_{R_1R_2}}\|\sqrt{\rho_{R_1 R_2}} (J_{RB}^{\cF})^{\ox 2} \sqrt{\rho_{R_1 R_2}}\big)\\
	& = 1.9362 \\
	& > 2 \times 0.92\\
	& > 2 D(\cE\|\cF)\,.
\end{align*}
The difference is $1.9362 - 2 \times 0.9176 = 0.1010$ and shows that the channel relative entropy is not additive. 
More generally, as shown in Figure~\ref{nonadditivity plots}, we can plot the difference $D(\cA_{\gamma_1,0}^{\ox 2}\|\cA_{\gamma_2,0.9}^{\ox 2}) - 2 D(\cA_{\gamma_1,0}\|\cA_{\gamma_2,0.9})$ for a wide range of $\gamma_1, \gamma_2$. Due to the symmetry of the channel $\cA_{\gamma,\beta}^{\ox 2}$, we can restrict the computation of $D(\cA_{\gamma_1,0}^{\ox 2}\|\cA_{\gamma_2,0.9}^{\ox 2})$ to a two-parameter state $\rho_{R_1R_2} = \diag(p_1,p_2,p_2,1-p_1-2p_2)$ and we utilize the function \texttt{quantum\_rel\_entr} from \texttt{CVXQUAD}~\cite{cvxquad}. We observe that the relative entropy is not additive for a wide range of parameters.
\end{proof}

\subsection{Chain rule for smooth max-relative entropy}

The max-relative entropy has many desirable properties other relative entropies do not share. For example we can utilize the fact that it satisfies the triangle inequality (see e.g.~\cite[Lemma~2.1]{Sutter2018}) to immediately prove a chain rule of the form~\eqref{eq_hope}. In fact, one can even prove a strengthened chain rule as follows. Let $\rho \in \St(A)$, $\sigma \in \Pos(A)$, $\cE \in \TPCP(A,B)$, $\cF\in \CP(A,B)$, and $\cG \in \TPCP(A,C)$ such that there exists $\cR \in \CP(C,B)$ satisfying $\cF = \cR \circ \cG$.\footnote{Note that we can always choose $\cG = \cI$ and $\cR = \cF$.} Then
\begin{align}
D_{\max}\big(\cE(\rho) \| \cF(\sigma) \big) 
&\leq D_{\max}\big(\cE(\rho) \| \cF(\rho) \big) + D_{\max}\big(\cF(\rho) \| \cF(\sigma) \big) \nonumber \\
&=D_{\max}\big(\cE(\rho) \| \cF(\rho) \big) + D_{\max}\big(\cR \circ \cG (\rho) \| \cR \circ \cG (\sigma) \big) \nonumber \\
&\leq D_{\max}\big( \cG(\rho) \| \cG(\sigma) \big) +  D_{\max}\big(\cE(\rho) \|\cF(\rho) \big) \, , \label{eq_chainDmax}
\end{align}
where the final step uses the data-processing inequality for the max-relative entropy. We note that this fact has been observed also in~\cite[Section IV.D]{berta18}. The relative entropy does not satisfy the triangle inequality, so this simple proof does not work for the relative entropy. Instead our proof strategy will be to first show a chain rule for the smooth max-relative entropy and then apply the asymptotic equipartition property to derive the statement for the relative entropy.

In the following we will restrict the map $\cG$ to be a partial trace. More precisely we consider $A = A_1 \otimes A_2$ and $\cG = \tr_{A_1}$. We note that having a non-trivial map $\cG$ in the form of a partial trace in the chain rule turns out to be important to reproduce existing results as discussed in Remark~\ref{rmk_chainRuleCond}.
\begin{proposition}
\label{prop_dmax_chainrule}
Let $\eps, \eps' \in (0,1]$, $m \in \N$, $\rho_{A_1 A_2} \in \St(A_1 \otimes A_2)$, $\sigma_{A_1 A_2} \in \Pos(A_1 \otimes A_2)$, $\cE \in \TPCP({A_1 \otimes A_2},B)$, $\cF\in \CP(A_1 \otimes A_2,B)$ such that there exists $\cR \in \CP(A_2,B)$ satisfying $\cF = \cR \circ \tr_{A_1}$. Then
\begin{align} 
&D^{m\eps + \sqrt{m \eps} + \eps'}_{\max}\big(\cE(\rho_{A_1 A_2})^{\otimes m} \| \cF(\sigma_{A_1 A_2})^{\otimes m} \big)  \nonumber \\
&\hspace{20mm}\leq m D^{\eps}_{\max}\big(\rho_{A_2} \|\sigma_{A_2} \big) + \!\max_{\nu \in \St(A_1 \otimes A_2)}\! D_{\max}^{\eps'}\big(\cE(\nu)^{\otimes m} \| \cF(\nu)^{\otimes m} \big) - m \log(1-\eps) \, . \label{eq_dmax_chainrule}
\end{align}
\end{proposition}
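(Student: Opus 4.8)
The plan is to prove the statement through explicit operator inequalities rather than by invoking a ready-made chain rule as a black box, because the two terms on the right-hand side must be smoothed simultaneously and then recombined into a single witness. The structural ingredient is the factorisation $\cF = \cR\circ\tr_{A_1}$: it forces $\cF$ to depend on its input only through the $A_2$-marginal, so that $\cF(\tau) = \cR(\tau_{A_2})$ for every $\tau$, and in particular $\cF(\sigma_{A_1A_2}) = \cR(\sigma_{A_2})$. This is what lets the first term of the chain rule be expressed through $\sigma_{A_2}$ alone. The skeleton mirrors the non-smooth bound derived just before the statement, but each inequality is now tracked at the level of (sub)normalised operators so that the smoothing radii can be accounted for.

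First I would fix a near-optimiser for the marginal term: choose $\tilde\rho_{A_2}\in\BB_{\eps}(\rho_{A_2})$ with $\tilde\rho_{A_2}\leq 2^{D_1}\sigma_{A_2}$, where $D_1 := D_{\max}^{\eps}(\rho_{A_2}\|\sigma_{A_2})$. Since $\rho_{A_2}$ is normalised, $P(\tilde\rho_{A_2},\rho_{A_2})\leq\eps$ forces $t := \tr\tilde\rho_{A_2}\geq 1-\eps$. Using Uhlmann's theorem in its extension form, I would lift $\tilde\rho_{A_2}$ to a state $\tilde\rho_{A_1A_2}$ with $\tr_{A_1}\tilde\rho_{A_1A_2} = \tilde\rho_{A_2}$ and $P(\tilde\rho_{A_1A_2},\rho_{A_1A_2})\leq\eps$. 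Because the channel term ranges over normalised inputs, I would then renormalise to $\nu := \tilde\rho_{A_1A_2}/t\in\St(A_1\otimes A_2)$ and pick $\omega\in\BB_{\eps'}(\cE(\nu)^{\otimes m})$ with $\omega\leq 2^{D_2}\cF(\nu)^{\otimes m}$, where $D_2 := \max_{\nu'}D_{\max}^{\eps'}(\cE(\nu')^{\otimes m}\|\cF(\nu')^{\otimes m})$. This $\omega$ is my candidate smoothing of $\cE(\rho_{A_1A_2})^{\otimes m}$.

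The operator side then chains directly. Writing $\cF(\nu) = \cR(\tilde\rho_{A_2}/t)\leq 2^{D_1}t^{-1}\cR(\sigma_{A_2}) = 2^{D_1}t^{-1}\cF(\sigma_{A_1A_2})$ and tensorising (using that $0\leq X\leq Y$ implies $X^{\otimes m}\leq Y^{\otimes m}$) gives $\omega\leq 2^{D_2}\cF(\nu)^{\otimes m}\leq 2^{mD_1+D_2}t^{-m}\cF(\sigma_{A_1A_2})^{\otimes m}$, so with $t\geq 1-\eps$ we get $D_{\max}(\omega\|\cF(\sigma_{A_1A_2})^{\otimes m})\leq mD_1 + D_2 - m\log(1-\eps)$, which is exactly the target right-hand side. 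For the smoothing radius I would bound $P(\omega,\cE(\rho_{A_1A_2})^{\otimes m})$ by a three-step triangle inequality: the step $\omega\to\cE(\nu)^{\otimes m}$ costs $\eps'$; the renormalisation step $\cE(\nu)^{\otimes m}\to\cE(\tilde\rho_{A_1A_2})^{\otimes m} = t^m\cE(\nu)^{\otimes m}$ costs $P = \sqrt{1-t^m}\leq\sqrt{m\eps}$; and the step $\cE(\tilde\rho_{A_1A_2})^{\otimes m}\to\cE(\rho_{A_1A_2})^{\otimes m}$ costs, by monotonicity of $P$ under $\cE^{\otimes m}$ followed by a telescoping expansion over the $m$ tensor factors, at most $m\,P(\tilde\rho_{A_1A_2},\rho_{A_1A_2})\leq m\eps$. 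Summing yields the radius $m\eps + \sqrt{m\eps} + \eps'$.

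I expect the main obstacle to be the bookkeeping forced by the normalisation constraint in the channel term. The state produced by the smoothing and the Uhlmann lift is subnormalised, whereas the maximisation defining $D_2$ runs over density operators; renormalising by $t$ is unavoidable and is precisely what generates the $-m\log(1-\eps)$ penalty and the $\sqrt{m\eps}$ contribution, and it is also why the crude bound $t\geq 1-\eps$ (rather than the sharper $t\geq 1-\eps^2$) is the convenient one to carry through. A secondary delicate point is that, once subnormalised states are present, the generalised fidelity is no longer simply multiplicative, so I would obtain the $m\eps$ term via the telescoping triangle inequality (which is robust to subnormalisation) instead of fidelity multiplicativity, and I would verify that the purified-distance extension lemma is applied in a regime where it remains valid for subnormalised operators.
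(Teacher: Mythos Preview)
Your proposal is correct and follows essentially the same route as the paper's own proof: the optimal smoothing $\tilde\rho_{A_2}$, the Uhlmann-type extension to $\tilde\rho_{A_1A_2}$, the renormalisation by $t$, the operator chain through $\cF=\cR\circ\tr_{A_1}$, and the three-term triangle-inequality bound on the purified distance (yielding $\eps'+\sqrt{m\eps}+m\eps$) all match the paper step for step. The only cosmetic difference is notation; even your anticipated ``obstacles'' (the $-m\log(1-\eps)$ penalty from renormalisation and the telescoping argument for the $m\eps$ term) are exactly the points the paper handles explicitly.
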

\begin{proof}
There exists $\omega_{A_2} \in \BB_{\eps}(\rho_{A_2})$ such that 
\begin{align*}
\omega_{A_2} \leq 2^{D^{\eps}_{\max}(\rho_{A_2} \| \sigma_{A_2} )} \sigma_{A_2} \, .
\end{align*}
Using the property of the purified distance~\cite[Section~3.3]{tomamichel_phd}, there exists $\rho_{A_1 A_2}^{\eps} \in \BB_{\eps}(\rho_{A_1 A_2})$ such that $\omega_{A_2} = \tr_{A_1}\, \rho_{A_1 A_2}^{\eps}$. Note that we have $\tr\, \rho_{A_1 A_2} - \tr\, \rho_{A_1 A_2}^{\eps} \leq P(\rho_{A_1 A_2}, \rho_{A_1 A_2}^{\eps}) \leq \eps$ (see e.g., \cite[Lemma 3.5]{marco_book}) and thus 
\begin{align*}
\tr\, \rho_{A_1 A_2}^{\eps} \geq 1-\eps \, .
\end{align*}
Then, setting $\nu_{A_1A_2}^{\eps} =\rho_{A_1A_2}^{\eps}/ \tr \,\rho_{A_1A_2}^{\eps}$ we have that there exists $\tau_B \in \BB_{\eps'}(\cE(\nu_{A_1 A_2}^{\eps})^{\otimes m})$ such that 
\begin{align*}
\tau_B &\leq 2^{\max_{\nu \in \St(A_1 \otimes A_2)} D_{\max}^{\eps'}(\cE(\nu)^{\otimes m} \| \cF(\nu)^{\otimes m} )} \cF(\nu_{A_1 A_2}^{\eps})^{\otimes m} \\
&= 2^{\max_{\nu \in \St(A_1 \otimes A_2)} D_{\max}^{\eps'}(\cE(\nu)^{\otimes m} \| \cF(\nu)^{\otimes m} )} \frac{1}{\tr(\rho_{A_1A_2}^{\eps})^m} \cF(\rho_{A_1 A_2}^{\eps})^{\otimes m} \, .
\end{align*}
Now using the fact that $\cF = \cR \circ \tr_{A_1}$, we get
\begin{align*}
\tau_B 
&\leq 2^{\max_{\nu \in \St(A_1 \otimes A_2)} D_{\max}^{\eps'}(\cE(\nu)^{\otimes m} \| \cF(\nu)^{\otimes m} ) - m \log \tr(\rho_{A_1A_2}^{\eps}) } \cR^{\otimes m}\big( \tr_{A_1}( \rho_{A_1 A_2}^{\eps})^{\otimes m} \big) \\
& \leq 2^{\max_{\nu \in \St(A_1 \otimes A_2)} D_{\max}^{\eps'}(\cE(\nu)^{\otimes m} \| \cF(\nu)^{\otimes m} ) - m \log(1-\eps) + m D^{\eps}_{\max}( \rho_{A_2} \| \sigma_{A_2} )} \cF(\sigma_{A_1 A_2})^{\otimes m} \ ,
\end{align*}
where we used that the map $\cR$ is completely positive and the fact that $\tr\,\rho^{\eps} \geq 1-\eps$.
We now bound the purified distance using the triangle inequality
 \begin{align}
 &P\big(\tau_B, \cE(\rho_{A_1 A_2})^{\otimes m}\big) \nonumber \\
 &\hspace{10mm}\leq P\big(\tau_B, \cE(\nu_{A_1 A_2}^{\eps})^{\otimes m}\big) + P\big(\cE(\nu_{A_1 A_2}^{\eps})^{\otimes m}, \cE(\rho_{A_1 A_2}^{\eps})^{\otimes m}\big) + P\big(\cE(\rho_{A_1 A_2}^{\eps})^{\otimes m}, \cE(\rho_{A_1 A_2})^{\otimes m}\big) \ . \label{eq_distP}
 \end{align}
 By definition of $\tau$, we have $P\big(\tau_B, \cE(\nu_{A_1 A_2}^{\eps})^{\otimes m}\big) \leq \eps'$. For the second term we can use the definition of the purified distance to bound
 \begin{align*}
 P\big(\cE(\nu_{A_1 A_2}^{\eps})^{\otimes m}, \cE(\rho_{A_1 A_2}^{\eps})^{\otimes m}\big) 
 \!=\! \sqrt{1 - \left\| \frac{\cE(\rho_{A_1 A_2}^{\eps})^{\otimes m}}{\sqrt{\tr(\rho^{\eps})^{m} }} \right\|^2_1} 
 \!=\! \sqrt{1-(\tr\, \rho_{A_1 A_2}^\eps)^m}
 \!\leq\! \sqrt{1 - (1-\eps)^{m}} 
 \!\leq\! \sqrt{m \eps} \ ,
 \end{align*}
 where the final step uses the fact that $(1-\eps)^{m} \geq 1-m \eps$, which follows from the convexity of the function $ [0,1] \ni x \mapsto (1-x)^m$.
 For the last term in~\eqref{eq_distP}, we use the triangle inequality $m$ times together with the monotonicity of the purified distance under completely positive trace-preserving maps to get
 \begin{align*}
 P\big(\cE(\rho_{A_1 A_2}^{\eps})^{\otimes m}, \cE(\rho_{A_1 A_2})^{\otimes m}\big) \leq m \eps \, ,
 \end{align*} 
 which completes the proof.
\end{proof}

\begin{remark}
Given the result from Proposition~\ref{prop_dmax_chainrule} and the intuition from the max-relative entropy~\eqref{eq_chainDmax}, one may be tempted trying to prove~\eqref{eq_dmax_chainrule} for an arbitrary trace-preserving completely positive map $\cG$ that satisfies $\cF = \cR \circ\, \cG$. The following example however shows this is not possible.
Consider $\cE = \cI$, $\cF = \eps \cI$, $\rho = \proj{0}$, $\sigma = \proj{1}$, and $\cG(X) = (1-\eps)\proj{2} + \eps X$. Then we have $\cF = \cR \circ \cG$ with $\cR(X) = (\proj{0} + \proj{1})X(\proj{0} + \proj{1})$. Then for $\eps, \eps'$ not too large, we have $D^{\eps + \eps'}_{\max}\big(\cE(\rho) \| \cF(\sigma) \big) = + \infty$, as well as $D^{\eps'}(\cE \| \cF) \leq \log(1/\eps)$ and $D^{\eps}_{\max}(\cG(\rho) \| \cG(\sigma)) \leq D_{\max}((1-\eps) \proj{2} \| (1-\eps) \proj{2} + \eps \sigma) \leq 0$.
\end{remark}

\subsection{Chain rule for the quantum relative entropy}
In this section we prove a chain rule for the relative entropy and discuss its implications. To do so we use the chain rule for the smooth max-relative entropy from Proposition~\ref{prop_dmax_chainrule} and apply the asymptotic equipartition property (AEP) of this quantity. Hence we start by recalling that result.
\begin{lemma}[AEP for smooth max-relative entropy] \label{lem_AEP_dmax}
Let $\rho \in \St(A)$ and $\sigma \in \Pos(A)$. For any $\eps \in (0,1)$ and $n \geq 2 g(\eps)$, we have
\begin{align}
\label{AEP_dmax_ub}
\frac{1}{n} D^{\eps}_{\max}(\rho^{\otimes n} \| \sigma^{\otimes n}) \leq D(\rho \| \sigma) + \frac{4 (\log \mu) \sqrt{g(\eps)} }{\sqrt{n}}  \ ,
\end{align}
where $g(\eps) = \log(2/\eps^2)$ and $\mu = 1 + \tr\,\rho^{\frac{3}{2}} \sigma^{-\frac{1}{2}}+ \tr\, \rho^{\frac{1}{2}} \sigma^{\frac{1}{2}} $.
In addition, for any sequence $(\eps_n)_{n \in \N}$ with $\eps_n \in (0,1)$ such that $\lim_{n \to \infty} \eps_n < 1$, we have
\begin{align}
\label{AEP_dmax_lb}
\lim_{n \to \infty} \frac{1}{n} D^{\eps_n}_{\max}(\rho^{\otimes n} \| \sigma^{\otimes n}) \geq  D(\rho \| \sigma) \ .
\end{align}
\end{lemma}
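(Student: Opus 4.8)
The plan is to bracket $D^{\eps}_{\max}(\rho^{\otimes n}\|\sigma^{\otimes n})$ between Petz Rényi relative entropies of order $\alpha$ close to $1$, exploiting that these are additive under tensor powers whereas $D_{\max}^{\eps}$ is not. Write $Q_\alpha(\rho\|\sigma) := \tr[\rho^\alpha \sigma^{1-\alpha}]$ and $D_\alpha(\rho\|\sigma) := \frac{1}{\alpha-1}\log Q_\alpha(\rho\|\sigma)$ for the Petz divergence, so that $\mu = 1 + Q_{3/2}(\rho\|\sigma) + Q_{1/2}(\rho\|\sigma)$. The very shape of $\mu$ signals that the working range will be $\alpha \in (1/2, 3/2)$, with the two endpoints controlling the error constant.

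For the upper bound \eqref{AEP_dmax_ub} I would first establish a truncation estimate: for $\alpha > 1$, projecting $\rho^{\otimes n}$ off the subspace where it exceeds $2^\lambda \sigma^{\otimes n}$ with $\lambda = D_\alpha(\rho^{\otimes n}\|\sigma^{\otimes n}) + \frac{1}{\alpha-1}\log\frac{1}{1-\sqrt{1-\eps^2}}$ discards at most an $\eps$-fraction of the weight in purified distance, by a Markov-type bound on $Q_\alpha$. Since $\frac{1}{1-\sqrt{1-\eps^2}} \le \frac{2}{\eps^2}$, this gives
\[ D^{\eps}_{\max}(\rho^{\otimes n}\|\sigma^{\otimes n}) \le D_\alpha(\rho^{\otimes n}\|\sigma^{\otimes n}) + \frac{g(\eps)}{\alpha-1} = n\, D_\alpha(\rho\|\sigma) + \frac{g(\eps)}{\alpha-1}, \]
where additivity is used in the last step. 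Dividing by $n$ reduces the whole statement to a one-sided Taylor estimate of $D_\alpha$ at $\alpha = 1$.

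The heart of the argument is to show, for $\alpha \in (1, 3/2]$, a bound of the form $D_\alpha(\rho\|\sigma) \le D(\rho\|\sigma) + 4(\alpha-1)(\log\mu)^2$. Setting $s = \alpha-1$ and $h(s) = \log Q_{1+s}(\rho\|\sigma)$, one has $h(0) = 0$, $h'(0) = D(\rho\|\sigma)$, and $D_{1+s} = h(s)/s = h'(0) + \frac{s}{2} h''(\xi)$ for some $\xi \in (0,s)$, so it suffices to bound $h''$ uniformly on $(0,1/2]$. Differentiating twice produces traces of the type $\tr[\rho^{1+s}(\log\rho - \log\sigma)^2 \sigma^{-s}]$ (with the non-commuting factors placed appropriately); the slowly growing logarithmic factors can be absorbed by elementary scalar inequalities that trade a logarithm for a fractional power, and log-convexity of $\alpha \mapsto Q_\alpha$ then bounds the resulting traces by the endpoint values $Q_{1/2}$ and $Q_{3/2}$, which is precisely where $\log\mu$ enters. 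This explicit, uniform control of the second derivative through non-commuting operator logarithms is the step I expect to be the main obstacle. Granting it, I would set $\alpha - 1 = \frac{1}{2\log\mu}\sqrt{g(\eps)/n}$, which lies in $(0,1/2]$ exactly when $n \ge 2 g(\eps)$ and balances the two error terms to yield the claimed correction $\frac{4(\log\mu)\sqrt{g(\eps)}}{\sqrt{n}}$.

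For the lower bound \eqref{AEP_dmax_lb} the situation is softer, since only the asymptotic rate is needed. I would use that the Petz divergence is non-decreasing in $\alpha$ and bounded above by $D_{\max}$, which gives $D_{\max}(\tilde\rho\|\sigma) \ge D_\alpha(\tilde\rho\|\sigma)$ for every $\alpha \in (0,1)$ and every $\tilde\rho$ (here the operator monotonicity of $x \mapsto x^{1-\alpha}$ turns $\rho \le 2^\lambda \sigma$ into the required bound on $Q_\alpha$), together with a complementary truncation estimate for $\alpha \in (0,1)$ showing that smoothing by $\eps_n$ costs only an additive $\frac{1}{1-\alpha}\log\frac{1}{1-\eps_n^2}$. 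After dividing by $n$ and using additivity this reads
\[ \frac{1}{n} D^{\eps_n}_{\max}(\rho^{\otimes n}\|\sigma^{\otimes n}) \ge D_\alpha(\rho\|\sigma) - \frac{1}{n(1-\alpha)}\log\frac{1}{1-\eps_n^2}. \]
Because $\limsup_{n} \eps_n < 1$, the correction vanishes as $n \to \infty$, leaving $\liminf_n \frac1n D^{\eps_n}_{\max} \ge D_\alpha(\rho\|\sigma)$ for each $\alpha \in (0,1)$; letting $\alpha \uparrow 1$ and invoking continuity of the Petz divergence at $\alpha = 1$, namely $\lim_{\alpha \uparrow 1} D_\alpha(\rho\|\sigma) = D(\rho\|\sigma)$, completes the proof.
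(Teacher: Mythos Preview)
The paper does not actually prove this lemma: it simply quotes the upper bound from Tomamichel's thesis and the lower bound from the Tomamichel--Hayashi hierarchy paper (restated in~\cite{FR14}). Your proposal therefore goes well beyond what the paper does, and your upper-bound sketch is in fact precisely the argument behind the cited reference: bound $D_{\max}^{\eps}$ by $D_\alpha$ via a Markov-type truncation, use additivity of $Q_\alpha$ under tensor powers, control $D_\alpha - D$ by a uniform second-derivative bound on $s\mapsto \log Q_{1+s}$ over $s\in(0,1/2]$ (this is where the endpoint quantities $Q_{1/2}$ and $Q_{3/2}$ and hence $\mu$ enter), and optimise $\alpha$. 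The step you correctly flag as the ``main obstacle'' is indeed the non-trivial part of Tomamichel's Lemma~6.3; your identification of it and of the balancing choice of $\alpha-1$ is exactly right. (A minor quibble: $\alpha-1\le 1/2$ is not \emph{equivalent} to $n\ge 2g(\eps)$; it is merely implied by it, using $\log\mu\ge\log 3>1$.)

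For the lower bound there is a genuine gap in the route you describe. You chain
\[
D_{\max}^{\eps}(\rho^{\otimes n}\|\sigma^{\otimes n})
= \inf_{\tilde\rho\in\BB_{\eps}} D_{\max}(\tilde\rho\|\sigma^{\otimes n})
\ge \inf_{\tilde\rho\in\BB_{\eps}} D_\alpha(\tilde\rho\|\sigma^{\otimes n})
\ge D_\alpha(\rho^{\otimes n}\|\sigma^{\otimes n}) - \tfrac{1}{1-\alpha}\log\tfrac{1}{1-\eps^2},
\]
but the last inequality is false in general: a purified-distance perturbation of the \emph{first} argument of $D_\alpha$ can change it by an amount independent of $\eps$. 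For a commuting counterexample take $\rho=(1,0)$, $\tilde\rho=(1-\delta,\delta)$, $\sigma=(\delta^2,1-\delta^2)$; then $P(\rho,\tilde\rho)=\sqrt{\delta}$, yet $D_{1/2}(\rho\|\sigma)-D_{1/2}(\tilde\rho\|\sigma)\approx -\log\delta\to\infty$, while your claimed correction is $2\log\frac{1}{1-\delta}\approx 2\delta$. The displayed inequality you arrive at,
\[
\tfrac{1}{n} D^{\eps_n}_{\max}(\rho^{\otimes n}\|\sigma^{\otimes n}) \ge D_\alpha(\rho\|\sigma) - \tfrac{1}{n(1-\alpha)}\log\tfrac{1}{1-\eps_n^2},
\]
\emph{is} correct, but the proof must avoid the intermediate $D_\alpha(\tilde\rho\|\sigma)$: one uses $\sigma^{\otimes n}\ge 2^{-\lambda}\tilde\rho$ together with operator monotonicity of $x\mapsto x^{1-\alpha}$ to bound $Q_\alpha(\rho^{\otimes n}\|\sigma^{\otimes n})\ge 2^{-\lambda(1-\alpha)}\,\tr(\rho^{\otimes n})^{\alpha}\tilde\rho^{\,1-\alpha}$, and then lower-bounds the last trace in terms of the purified distance between $\rho^{\otimes n}$ and $\tilde\rho$ (for $\alpha=\tfrac12$ this is the Powers--St{\o}rmer inequality; in general one needs a bit more care, which is why the cited references handle the lower bound via the information-spectrum or hypothesis-testing divergence rather than through $D_\alpha(\tilde\rho\|\sigma)$). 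So your endpoint is right, but the two-step mechanism you name does not get you there.
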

\begin{proof}
The upper bound~\eqref{AEP_dmax_ub} can be found in~\cite[Theorem 6.4]{tomamichel_phd}, see also \cite{tomamichel09} and \cite[Equation (6.96)]{marco_book}.
%
%
%
%
For the lower bound~\eqref{AEP_dmax_lb}, it can be found in~\cite{tomamichel2013hierarchy} and it is stated in the form we need in~\cite[Lemma 2.1]{FR14}.
\end{proof}

We are finally ready to state a chain rule for the relative entropy.
\begin{theorem}[Chain rule for relative entropy] \label{thm_chainRule}
Let $\rho \in \St(A)$, $\sigma \in \Pos(A)$, $\cE \in \TPCP(A,B)$, and $\cF\in \CP(A,B)$ such that $D_{\max}(\cE \| \cF) < \infty$. Then
\begin{align} \label{eq_chainRule}
D\big(\cE(\rho) \| \cF(\sigma) \big) \leq D\big( \rho \| \sigma \big) +  \bar{D}^{\mathrm{reg}}(\cE \| \cF) \, .
\end{align}
In addition, in case $A = A_1 \otimes A_2$ and if there exists $\cR \in \CP(A_2,B)$ such that $\cF = \cR \circ \tr_{A_1}$ this inequality can be strengthened to 
\begin{align} \label{eq_chainRule_general}
D\big(\cE(\rho_{A_1 A_2}) \| \cF(\sigma_{A_1 A_2}) \big) \leq D( \rho_{A_2} \| \sigma_{A_2}) +  \bar{D}^{\mathrm{reg}}(\cE \| \cF) \, .
\end{align}
\end{theorem}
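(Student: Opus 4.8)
The plan is to obtain \eqref{eq_chainRule} as the special case of \eqref{eq_chainRule_general} in which $A_1$ is trivial (take $\cG = \cI$, $\cR = \cF$, so $A_2 = A$ and $\rho_{A_2} = \rho$); hence I focus on \eqref{eq_chainRule_general}. The strategy, as announced before the statement, is to feed the smooth-max chain rule of Proposition~\ref{prop_dmax_chainrule} into the AEP of Lemma~\ref{lem_AEP_dmax}, converting each smooth max-relative entropy into an ordinary relative entropy by sending the number of copies to infinity.

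Concretely, I would apply Proposition~\ref{prop_dmax_chainrule} not to $\cE,\cF,\rho,\sigma$ directly but to their $n$-fold tensor powers: input $\rho^{\otimes n}$ on $(A_1 \otimes A_2)^{\otimes n}$, channels $\cE^{\otimes n}, \cF^{\otimes n}$, and $\cR^{\otimes n}$ (so that $\cF^{\otimes n} = \cR^{\otimes n}\circ \tr_{A_1^{\otimes n}}$), with smoothing parameters $\eps, \eps'$ and internal power $m$. Using $\cE^{\otimes n}(\rho^{\otimes n})^{\otimes m} = \cE(\rho)^{\otimes nm}$, $(\rho^{\otimes n})_{A_2^{\otimes n}} = \rho_{A_2}^{\otimes n}$, and writing $\omega_\nu := \cE^{\otimes n}(\nu)$, $\tau_\nu := \cF^{\otimes n}(\nu)$ for $\nu \in \St(A^{\otimes n})$, dividing the resulting inequality by $nm$ yields, with $\delta := m\eps + \sqrt{m\eps} + \eps'$,
\begin{align}
&\frac{1}{nm} D^{\delta}_{\max}\big(\cE(\rho)^{\otimes nm} \,\|\, \cF(\sigma)^{\otimes nm}\big) \nonumber \\
&\qquad \leq \frac{1}{n} D^{\eps}_{\max}\big(\rho_{A_2}^{\otimes n} \,\|\, \sigma_{A_2}^{\otimes n}\big) + \frac{1}{nm}\max_{\nu \in \St(A^{\otimes n})} D^{\eps'}_{\max}\big(\omega_\nu^{\otimes m} \,\|\, \tau_\nu^{\otimes m}\big) - \frac{1}{n}\log(1-\eps) . \nonumber
\end{align}
I would then estimate all four terms via Lemma~\ref{lem_AEP_dmax}: the left-hand side from below by $D(\cE(\rho)\|\cF(\sigma))$ through the AEP lower bound \eqref{AEP_dmax_lb} (which only needs the smoothing $\delta$ to stay below $1$); the first right-hand term from above by $D(\rho_{A_2}\|\sigma_{A_2})$ through \eqref{AEP_dmax_ub}; and, crucially, since $\omega_\nu^{\otimes m}, \tau_\nu^{\otimes m}$ are genuine $m$-fold powers, the channel term from above by \eqref{AEP_dmax_ub} together with $\max_\nu D(\cE^{\otimes n}(\nu)\|\cF^{\otimes n}(\nu)) = \bar D(\cE^{\otimes n}\|\cF^{\otimes n})$, so that it is controlled by $\tfrac1n \bar D(\cE^{\otimes n}\|\cF^{\otimes n})$ up to a vanishing correction. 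The last term tends to $0$, and recalling $\tfrac1n \bar D(\cE^{\otimes n}\|\cF^{\otimes n}) \to \bar{D}^{\mathrm{reg}}(\cE\|\cF)$ delivers exactly \eqref{eq_chainRule_general}.

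The delicate point is that these limiting requirements pull the parameters in opposite directions and must be reconciled in one coordinated limit: the channel term needs $m \to \infty$ to invoke the AEP, whereas keeping $\delta < 1$ forces $m\eps \to 0$, and smoothing the $\rho_{A_2}$ term needs $n \to \infty$ with correction $\sqrt{g(\eps)/n} \to 0$. I would therefore tie everything to a single index, e.g.\ $\eps = \eps_m = m^{-2}$ and $n = n_m = m$, so that $\delta = \eps' + m^{-1} + m^{-1/2} \to \eps' < 1$, $m\eps_m \to 0$, and $g(\eps_m)/n_m = O(\log m)/m \to 0$, and then let $m \to \infty$ with $\eps' \in (0,1)$ fixed. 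Chaining $D(\cE(\rho)\|\cF(\sigma)) \le \liminf_m(\text{LHS}) \le \limsup_m(\text{RHS}) \le D(\rho_{A_2}\|\sigma_{A_2}) + \bar{D}^{\mathrm{reg}}(\cE\|\cF)$ then closes the argument.

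The main obstacle, and the place where the hypothesis $D_{\max}(\cE\|\cF) < \infty$ enters, is controlling the AEP correction in the channel term \emph{uniformly in $\nu$}: the constant $\mu$ of Lemma~\ref{lem_AEP_dmax} for the pair $(\omega_\nu, \tau_\nu)$ involves $\tr\,\omega_\nu^{3/2}\tau_\nu^{-1/2}$, which could a priori blow up as $\tau_\nu$ degenerates. Here I would use that $D_{\max}(\cE^{\otimes n}\|\cF^{\otimes n}) = n D_{\max}(\cE\|\cF) =: n\lambda < \infty$ gives $\omega_\nu \leq 2^{n\lambda}\tau_\nu$, whence by H\"older $\tr\,\omega_\nu^{3/2}\tau_\nu^{-1/2} \leq \norm{\omega_\nu^{1/2}\tau_\nu^{-1/2}}_\infty\,\tr\,\omega_\nu \leq 2^{n\lambda/2}$ (and $\tr\,\omega_\nu^{1/2}\tau_\nu^{1/2} \le \sqrt{\tr\,\tau_\nu}$ is likewise at most exponential in $n$). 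Thus $\sup_\nu \log \mu_\nu = O(n) = O(m)$, so that the channel correction is $O\big(\log\mu_\nu/(n\sqrt m)\big) = O(1/\sqrt m) \to 0$. A last minor technicality is that the AEP lower bound is applied along the subsequence of copy-numbers $N_m = nm$ with the varying smoothing $\delta_m \to \eps'$, which is legitimate because \eqref{AEP_dmax_lb} permits an arbitrary smoothing sequence whose limit is below $1$.
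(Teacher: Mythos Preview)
Your proposal is correct and follows essentially the same approach as the paper: apply Proposition~\ref{prop_dmax_chainrule} to the $n$-fold tensor powers of $\cE,\cF,\rho,\sigma$, then use Lemma~\ref{lem_AEP_dmax} on each term, with the uniform control of the channel-term correction $\mu'$ coming from the additivity $D_{\max}(\cE^{\otimes n}\|\cF^{\otimes n})=nD_{\max}(\cE\|\cF)$. The only differences are cosmetic---the paper takes $m=n$, $\eps=1/(9n)$, $\eps'=1/9$ (so $\delta=5/9$ is constant), and bounds $\tr\,\omega_\nu^{3/2}\tau_\nu^{-1/2}$ via operator anti-monotonicity of $x\mapsto x^{-1/2}$ rather than your H\"older argument---but the structure and the key ideas coincide.
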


The following remark highlights the usefulness of the strengthened chain rule~\eqref{eq_chainRule_general} compared to its simplified version~\eqref{eq_chainRule}.  
\begin{remark}[The chain rule for conditional entropies] \label{rmk_chainRuleCond}
It is instructive to observe the following consequence of the usual chain rule for conditional entropies can be seen as a special case of our new chain rule~\eqref{eq_chainRule_general}.\footnote{Recall that the chan rule for conditional entropies ensures that for any $\rho_{C_1 C_2 D} \in \St(C_1 \otimes C_2 \otimes D)$ we have $H(C_1 C_2 | D)_{\rho} = H(C_1 |D)_{\rho} + H(C_2 | C_1 D)_{\rho}$, where the \emph{conditional entropy} is defined by $H(C|D):=-D(\rho_{CD} \| \id_C \otimes \rho_D)$.} Namely the inequality
\begin{align}
\label{eq_chain_rule_cond_entr}
H(C_1 C_2 | D)_{\rho} \geq H(C_1 |D)_{\rho} + \min_{\nu_{C_1 C_2 D} \in \mathrm{Y}} H(C_2 | C_1 D)_{\nu}\, ,
\end{align}
for $ \mathrm{Y}=\{ \nu_{C_1 C_2 D} \in \St(C_1 \otimes C_2 \otimes D) :  \nu_{C_1 C_2 D} = \nu_{C_1 D}^{1/2}\, \rho_{C_2 | C_1 D}\, \nu_{C_1 D}^{1/2} \}$ with $\rho_{C_2 | C_1 D} = \rho_{C_1 D}^{-1/2}\, \rho_{C_1 C_2 D} \, \rho_{C_1 D}^{-1/2}$. To see this let $E \simeq C_1 D$ and define the channel $\cA \in \TPCP(E,C_2)$ to have a Choi state $\rho_{C_2 | C_1 D}$ and $\cB \in \CP(E,C_2)$ as $\cB(X_E) = \id_{C_2} \tr\, X_E$. Then define the state $\rho_{C_1 D E} := \rho_{C_1D}^{1/2} \proj{\Omega} \rho_{C_1 D}^{1/2}$, where $\proj{\Omega}$ is an unnormalized maximally entangled state between $C_1D$ and $E$ and  we also define $\sigma_{C_1 D E} := \id_{C_1} \otimes \rho_{DE}$. Note that $\cA(\rho_{C_1D E}) = \rho_{C_1C_2 D}$ and $\cB(\sigma_{C_1 D E}) = \id_{C_1 C_2} \otimes \rho_{D}$. For $\cE = \cI_{C_1 D} \otimes \cA$ and $\cF=\cI_{C_1 D} \otimes \cB$ the chain rule~\eqref{eq_chainRule_general} gives
\begin{align*}
D(\rho_{C_1 C_2 D} \| \id_{C_1 C_2} \otimes \rho_{D}) 
&=D\big( \cE(\rho_{C_1 D E}) \| \cF(\sigma_{C_1 D E}) \big)\\
&\leq D(\rho_{C_1 D} \| \id_{C_1} \otimes \rho_{D} ) + D(\cA \| \cB) \\
&= D(\rho_{C_1 D} \| \id_{C_1} \otimes \rho_{D} ) + \max_{\nu_{C_1D} \in \St(C_1 \otimes D)} D(\nu_{C_1 D}^{\frac{1}{2}} \rho_{C_2 | C_1 D} \nu_{C_1 D}^{\frac{1}{2}} \| \id_{C_2} \otimes \nu_{C_1 D})  \, ,
\end{align*}
where in the second step we used that the channel relative entropy for replacer channels is additive under tensor products (see Remark~\ref{rmk_singleLetter}).
Written in terms of conditional entropies, this gives \eqref{eq_chain_rule_cond_entr}.
\end{remark}

\begin{proof}[Proof of Theorem~\ref{thm_chainRule}]
We may assume that $\supp(\rho) \subseteq \supp(\sigma)$, as otherwise, the right hand side is infinite.
The data-processing inequality implies $D( \rho_{A_2} \| \sigma_{A_2}) \leq D( \rho_{A_1 A_2} \| \sigma_{A_1 A_2})$ showing that~\eqref{eq_chainRule_general} implies~\eqref{eq_chainRule}. It thus suffices to prove~\eqref{eq_chainRule_general}. 
Let $n,m \in \N$ and $\eps, \eps' \in (0,1)$ that will be chosen later. Using Proposition~\ref{prop_dmax_chainrule} with maps $\cE^{\otimes n}$ and $\cF^{\otimes n}$ and states $\rho_{A_1 A_2}^{\otimes n}$ and $\sigma_{A_1 A_2}^{\otimes n}$, we have\footnote{The assumption that there exists $\cR \in \CP(A_2,B)$ such that $\cF=\cR \circ \tr_{A_1}$ implies that there also exists $\cR^n \in \CP(A_2^{\otimes n},B^{\otimes n})$, for example $\cR^n = \cR^{\otimes n}$, that satisfies $\cF^{\otimes n} = \cR^n \circ \tr_{A_1^{\otimes n}}$.}
\begin{align}
&D_{\max}^{m\eps + \sqrt{m \eps}+\eps'}\big( \left(\cE^{\otimes n}(\rho_{A_1 A_2}^{\otimes n}) \right)^{\otimes m} \| \left(\cF^{\otimes n}(\sigma_{A_1 A_2}^{\otimes n}) \right)^{\otimes m} \big) \nonumber \\
&\hspace{5mm}\leq m D^{\eps}_{\max}\big( \rho_{A_2}^{\otimes n} \| \sigma_{A_2}^{\otimes n} \big) + \max_{\nu \in \St(A_1^{\otimes n} \otimes A_2^{\otimes n})} D_{\max}^{\eps'}\big( \left(\cE^{\otimes n}(\nu)\right)^{\otimes m} \| \left(\cF^{\otimes n}(\nu)\right)^{\otimes m} \big) - m \log(1-\eps) \ . \label{eq_mid2}
\end{align}
Lemma~\ref{lem_AEP_dmax} implies that 
\begin{align*}
\frac{1}{n} D^{\eps}_{\max}\big( \rho_{A_2}^{\otimes n} \| \sigma_{A_2}^{\otimes n} \big) \leq D(\rho_{A_2} \| \sigma_{A_2}) + \frac{4(\log \mu)g(\eps)}{\sqrt{n}} \ ,
\end{align*}
where $\mu = 1 + \tr \, \rho_{A_2}^{\frac{3}{2}} \sigma_{A_2}^{-\frac{1}{2}} + \tr\, \rho_{A_2}^{\frac{1}{2}} \sigma_{A_2}^{\frac{1}{2}}$. Note that $\mu$ is a finite constant as we assumed that $\supp(\rho_{A_1 A_2}) \subseteq \supp(\sigma_{A_1 A_2})$. 
For the other relative entropy term on the right hand side of~\eqref{eq_mid2}, we also use Lemma~\ref{lem_AEP_dmax}
\begin{align*}
\frac{1}{m} D_{\max}^{\eps'}\big( \left(\cE^{\otimes n}(\nu)\right)^{\otimes m} \| \left(\cF^{\otimes n}(\nu)\right)^{\otimes m} \big)
&\leq D(\cE^{\otimes n}(\nu) \| \cF^{\otimes n}(\nu)) + \frac{4 (\log \mu') \sqrt{g(\eps')} }{\sqrt{m}} \ ,
\end{align*}
where $\mu' = 1 + \tr\,\cE^{\otimes n}(\nu)^{\frac{3}{2}} \cF^{\otimes n}(\nu)^{-\frac{1}{2}} + \tr\, \cE^{\otimes n}(\nu)^{\frac{1}{2}} \cF^{\otimes n}(\nu)^{\frac{1}{2}}$. Observe that for any $\nu \in {\St(A_1^{\otimes n} \otimes A_2^{\otimes n})}$, $\nu \leq \id_{A_1 \otimes A_2}^{\otimes n}$ and thus using the operator monotonicity of the square root function~\cite[Proposition~V.1.8]{bhatia_book}, we get
\begin{align*}
\tr\, \cE^{\otimes n}(\nu)^{\frac{1}{2}} \cF^{\otimes n}(\nu)^{\frac{1}{2}}
&\leq \tr\, (\cE(\id_{A_1 \otimes A_2})^{\otimes n})^{\frac{1}{2}} (\cF(\id_{A_1 \otimes A_2})^{\otimes n})^{\frac{1}{2}}
= \left(\tr\, \cE(\id_{A_1 \otimes A_2})^{\frac{1}{2}} \cF(\id_{A_1 \otimes A_2})^{\frac{1}{2}} \right)^n \ .
\end{align*}
For the other term, observe that for any $\nu \in \St(A_1^{\otimes n} \otimes A_2^{\otimes n})$, by definition of the max-relative entropy between channels, we have
\begin{align*}
\cE^{\otimes n}(\nu) &\leq 2^{D_{\max}(\cE^{\otimes n} \| \cF^{\otimes n})} \cF^{\otimes n}(\nu) \ .
\end{align*} 
In addition, we have $D_{\max}(\cE^{\otimes n} \| \cF^{\otimes n}) = n D_{\max}(\cE \| \cF)$~\cite[Lemma~12]{berta18}. Now use the operator anti-monotonicity of the function $x \mapsto x^{-\frac{1}{2}}$~\cite[Table~2.2]{Sutter_book}, we have $\cF^{\otimes n}(\nu)^{-\frac{1}{2}} \leq 2^{\frac{1}{2} n D_{\max}(\cE \| \cF)} \cE^{\otimes n}(\nu)^{-\frac{1}{2}}$ and hence
\begin{align*}
\tr\,\cE^{\otimes n}(\nu)^{\frac{3}{2}} \cF^{\otimes n}(\nu)^{-\frac{1}{2}}
\leq 2^{\frac{1}{2} n D_{\max}(\cE \| \cF)} \tr \, \cE^{\otimes n}(\nu)
= 2^{\frac{1}{2} n D_{\max}(\cE \| \cF)} \ .
\end{align*}
As a result, for $C := \log\big(1 + 2^{\frac{1}{2} D_{\max}(\cE \| \cF)} + \tr\,\cE(\id_{A_1 \otimes A_2})^{\frac{1}{2}} \cF(\id_{A_1 \otimes A_2})^{\frac{1}{2}} \big)$ we have $\log \mu' \leq n C$. Note that $C < \infty$ is a constant independent of $n$ or $m$ because by assumption $D_{\max}(\cE \| \cF) < \infty$. Putting things together we get
\begin{align*}
&\frac{1}{nm} D_{\max}^{m\eps + \sqrt{m \eps}+\eps'}\Big( \left(\cE^{\otimes n}(\rho_{A_1 A_2}^{\otimes n}) \right)^{\otimes m} \| \left(\cF^{\otimes n}(\sigma_{A_1 A_2}^{\otimes n}) \right)^{\otimes m} \Big) \\
&\leq  D(\rho_{A_2} \| \sigma_{A_2}) + \frac{4(\log \mu)g(\eps)}{\sqrt{n}}  + \frac{1}{n} \left(\max_{\nu \in \St(A_1^{\otimes n} \otimes A_2^{\otimes n})} D(\cE^{\otimes n}(\nu) \| \cF^{\otimes n}(\nu)) \!+\! \frac{4 C n \sqrt{g(\eps')} }{\sqrt{m}} - \log(1-\eps) \right) \ .
\end{align*}
Now choose $m = n, \eps_n = \frac{1}{9n}$, $\eps' = \frac{1}{9}$ and note that $2g(\eps_n) = 2\log(162 n^2) \leq n$ for large enough $n$, so that~\eqref{AEP_dmax_ub} is applicable. As a result, using~\eqref{AEP_dmax_lb}, the left hand side gives in the limit $n \to \infty$:
\begin{align*}
\lim_{n \to \infty} \frac{1}{n^2} D_{\max}^{\frac{5}{9}}\big( \left(\cE^{\otimes n}(\rho_{A_1 A_2}^{\otimes n}) \right)^{\otimes n} \| \left(\cF^{\otimes n}(\sigma_{A_1 A_2}^{\otimes n}) \right)^{\otimes n} \big) \geq D\big(\cE(\rho_{A_1 A_2}) \| \cF(\sigma_{A_1 A_2})\big) \ ,
\end{align*}
and the right hand side gives in the limit $n \to \infty$:
\begin{align*}
D(\rho_{A_2} \| \sigma_{A_2})  + \lim_{n \to \infty} \frac{1}{n} \max_{\nu \in \St(A_1^{\otimes n}\otimes A_2^{\otimes n} )} D\big(\cE^{\otimes n}(\nu) \| \cF^{\otimes n}(\nu)\big) \, ,
\end{align*}
which thus completes the proof.
\end{proof}

An important corollary to Theorem~\ref{thm_chainRule} is when the maps are of the form $\cI \otimes \cE$ and $\cI \otimes \cF$. In this case, in the right hand side, we get the more common (stabilized) relative entropy between channels.
\begin{corollary} \label{corollary_collapse}
Let $\cE \in \TPCP(A,B)$ and $\cF \in \CP(A,B)$. Then
\begin{align}
\sup_{\rho_{RA}, \sigma_{RA} \in \St(R \otimes A)} D( \cE(\rho_{RA}) \| \cF(\sigma_{RA}) ) - D(\rho_{RA} \| \sigma_{RA}) = D^{\mathrm{reg}}(\cE \| \cF) \ , \label{eq_chainCor}
\end{align}
where the supremum is over all possible finite dimensional systems $R$.
In other words,
\begin{align} \label{eq_collapse}
D^{A}(\cE \| \cF) = D^{\mathrm{reg}}(\cE \| \cF) \ .
\end{align}
\end{corollary}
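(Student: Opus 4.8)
The plan is to recognize that the left-hand side of~\eqref{eq_chainCor} is, by the paper's convention that $\cE(\rho_{RA})$ abbreviates $(\cI \otimes \cE)(\rho_{RA})$, exactly the amortized channel relative entropy $D^{A}(\cE \| \cF)$ from~\eqref{eq_amortized}. Hence the corollary is equivalent to the single identity $D^{A}(\cE \| \cF) = D^{\mathrm{reg}}(\cE \| \cF)$ claimed in~\eqref{eq_collapse}. One direction is already on the table: the chain $D(\cE \| \cF) \leq D^{\mathrm{reg}}(\cE \| \cF) \leq D^{A}(\cE \| \cF)$ quoted from~\cite{Wang2019} in the proof of Proposition~\ref{prop_notAdditive} supplies $D^{\mathrm{reg}}(\cE \| \cF) \leq D^{A}(\cE \| \cF)$. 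So the entire task reduces to the reverse bound $D^{A}(\cE \| \cF) \leq D^{\mathrm{reg}}(\cE \| \cF)$, and this is what Theorem~\ref{thm_chainRule} will deliver.

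First I would dispose of the degenerate case: if $D_{\max}(\cE \| \cF) = \infty$ then some input state $\phi$ has $\supp(\cE(\phi)) \not\subseteq \supp(\cF(\phi))$, whence $D(\cE \| \cF) = \infty$ and both sides of~\eqref{eq_collapse} are infinite, so equality is trivial. Assuming now $D_{\max}(\cE \| \cF) < \infty$, I would fix a reference system $R$ and states $\rho_{RA}, \sigma_{RA}$, and apply Theorem~\ref{thm_chainRule} not to $\cE, \cF$ themselves but to the stabilized maps $\cI_R \otimes \cE \in \TPCP(R \otimes A, R \otimes B)$ and $\cI_R \otimes \cF$, with inputs $\rho_{RA}$ and $\sigma_{RA}$ (treating $R \otimes A$ as a single input register). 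This is legitimate because $D_{\max}(\cI_R \otimes \cE \| \cI_R \otimes \cF) = D_{\max}(\cE \| \cF) < \infty$ by the stabilization argument used in the next paragraph. The theorem then gives
\begin{align*}
D\big(\cE(\rho_{RA}) \| \cF(\sigma_{RA})\big) - D(\rho_{RA} \| \sigma_{RA}) \leq \bar{D}^{\mathrm{reg}}(\cI_R \otimes \cE \| \cI_R \otimes \cF) \, ,
\end{align*}
and since the right-hand side no longer refers to $\rho_{RA}, \sigma_{RA}$, taking the supremum over all states and all finite-dimensional $R$ yields $D^{A}(\cE \| \cF) \leq \sup_R \bar{D}^{\mathrm{reg}}(\cI_R \otimes \cE \| \cI_R \otimes \cF)$.

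The crux is the stabilization identity $\bar{D}^{\mathrm{reg}}(\cI_R \otimes \cE \| \cI_R \otimes \cF) \leq D^{\mathrm{reg}}(\cE \| \cF)$, uniformly in $R$. I would argue it first at the single-copy level: by definition $\bar{D}(\cI_R \otimes \cE \| \cI_R \otimes \cF)$ is the maximum of $D\big((\cI_R \otimes \cE)(\phi) \| (\cI_R \otimes \cF)(\phi)\big)$ over $\phi \in \St(R \otimes A)$, which is precisely the channel relative entropy of $\cE$ against $\cF$ evaluated with the one particular reference $R$; since the stabilized quantity~\eqref{eq_channel_rel_ent} maximizes over references and is already saturated once the reference is isomorphic to $A$, this is at most $D(\cE \| \cF)$. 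Replacing $\cE, \cF$ by $\cE^{\otimes n}, \cF^{\otimes n}$ and using $(\cI_R \otimes \cE)^{\otimes n} \cong \cI_{R^{\otimes n}} \otimes \cE^{\otimes n}$ gives $\bar{D}\big(\left(\cI_R \otimes \cE\right)^{\otimes n} \| \left(\cI_R \otimes \cF\right)^{\otimes n}\big) \leq D(\cE^{\otimes n} \| \cF^{\otimes n})$; dividing by $n$ and letting $n \to \infty$ produces the claimed regularized bound, independent of $R$. Combining the displayed inequalities closes the loop. The step deserving care — a conceptual rather than technical obstacle — is keeping the stabilized versus non-stabilized and single-letter versus regularized channel divergences straight: the chain rule only outputs the \emph{non-stabilized} regularized quantity $\bar{D}^{\mathrm{reg}}$ of the enlarged channel, and it is the built-in stabilization carried by $\cI_R \otimes \cE$ that converts this back into the stabilized $D^{\mathrm{reg}}(\cE \| \cF)$ appearing in the statement.
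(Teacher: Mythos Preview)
Your proposal is correct and follows essentially the same approach as the paper: the inequality $D^{\mathrm{reg}}(\cE\|\cF)\leq D^{A}(\cE\|\cF)$ is imported from the channel discrimination literature, and the reverse inequality comes from applying Theorem~\ref{thm_chainRule} to the stabilized maps $\cI_R\otimes\cE$, $\cI_R\otimes\cF$ together with the observation $\bar D^{\mathrm{reg}}(\cI_R\otimes\cE\|\cI_R\otimes\cF)\leq D^{\mathrm{reg}}(\cE\|\cF)$. You are in fact slightly more careful than the paper in that you explicitly dispose of the degenerate case $D_{\max}(\cE\|\cF)=\infty$ and spell out the stabilization argument in detail.
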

We note that another interesting feature of~\eqref{eq_chainCor} is that it shows that for any $\cE \in \TPCP(A,B)$ and any $\cF  \in \CP(A,B)$ there exist states $\rho_{RA}$ and $\sigma_{RA}$ such that the chain rule holds with equality.
\begin{proof}
It is known that 
\begin{align*}
D^{\mathrm{reg}}(\cE \| \cF) 
\leq D^{A}(\cE \| \cF)  \, ,
\end{align*}
is correct because of the operational interpretation of these two quantities in the area of asymptotic quantum channel discrimination. This is discussed in detail in Section~\ref{sec_channelDiscrimination}.
Theorem~\ref{thm_chainRule} applied to channels $\cI_R \otimes \cE$ and $\cI_R \otimes \cF$ shows that
\begin{align*} 
D\big((\cI_R \otimes \cE)(\rho_{RA}) \| (\cI_R \otimes \cF)(\sigma_{RA}) \big) - D(\rho_{RA} \| \sigma_{RA})
\leq  \bar D^{\mathrm{reg}}(\cI_R \otimes \cE \|  \cI_R \otimes \cF)
\end{align*}
To conclude it suffices to observe that for any system $R$ we have $\bar D^{\mathrm{reg}}(\cI_R \otimes \cE \|  \cI_R \otimes \cF) \leq D^{\mathrm{reg}}(\cE \| \cF)$.
\end{proof}

\begin{remark}[Examples where the chain rule is single-letter] \label{rmk_singleLetter}
 We note that the regularized relative entropy term in Theorem~\ref{thm_chainRule} cannot be single-letterized in full generality without weakening the bounds as this quantity is not additive under tensor products as shown by Proposition~\ref{prop_notAdditive}. For channels with a specific structure their channel relative entropy is additive under the tensor product which implies that $D^{\mathrm{reg}}(\cE \| \cF)=D(\cE \| \cF)$. Examples of such channels are
 \begin{enumerate}[(i)]
\item classical-quantum channels~\cite[Lemma~25]{berta18}
\item covariant channels with respect to the unitary group~\cite[Corollary II.5]{leditzky2018approaches}
\item $\cE$ arbitrary and $\cF$ a replacer channel (i.e., $\cF(X) = \omega\, \tr\, X$ for $\omega \in \St(B)$)~\cite[Proposition~41]{berta18}.  
 \end{enumerate}
\end{remark}

\begin{remark}[Relaxation of the chain rule] 
We can single-letterize the chain rule from Theorem~\ref{thm_chainRule} by replacing the regularized channel relative entropy term with the Belavkin-Stasewski channel relative entropy defined by $\widehat D(\cE \| \cF) := \max_{\phi_{RA} \in \St(A \otimes A) } \tr\, \cE(\phi) \log \big(\cE(\phi)^{\frac{1}{2}} \cF(\phi)^{-1} \cE(\phi)^{\frac{1}{2}} )$. We note that the logarithmic trace inequality~\cite{HP93,ando94} (see also~\cite[Theorem~4.6]{Sutter_book}) ensures that $D(\cE \| \cF) \leq \widehat D(\cE \| \cF)$. Furthermore, the Belavkin-Stasewski channel relative entropy is additive under tensor products~\cite[Lemma~6]{kun19}. Another benefit from this relaxation is the fact that $\widehat D(\cA\|\cB)$ has an explicit form and is thus efficiently computable~\cite[Lemma~5]{kun19}. 
\end{remark}

\subsection{Asymptotic quantum channel discrimination}   \label{sec_channelDiscrimination}
A fundamental task in quantum information theory is to distinguish between two quantum channels $\cE, \cF \in \TPCP(A,B)$. For this problem one usually differentiates between two different classes of strategies:
\begin{enumerate}[(a)]
\item \emph{Non-adaptive strategies} (also called parallel strategies): Here we are given ``black-box'' access to $n$ uses of a channel $\cG$, which is either $\cE$ or $\cF$, that can be used in parallel before performing a measurement. More precisely, for an arbitrary state $\rho_{A^n R} \in \St(A_1 \otimes \ldots \otimes A_n \otimes R)$ with a reference system $R$ we create the state $\sigma_{B^n R}=\cG^{\otimes n}(\rho_{A^n R})$ and perform a measurement on $\sigma_{B^n R}$. Based on the measurement outcome we try to guess if $\cG=\cE$ or $\cG=\cF$. The protocol is depicted in Figure~\ref{fig_parallel}.
It has been shown recently~\cite[Theorem 3]{Wang2019} that in the asymmetric regime where we fix the type-I error to be bounded by $\eps$, the asymptotic optimal rate of the type-II error exponent is given by $D^{\mathrm{reg}}(\cE \| \cF)$, when $\eps$ goes to 0.

\item \emph{Adaptive strategies} (also called sequential strategies): Here we are also given ``black-box'' access to $n$ uses of a channel $\cG$ which is either $\cE$ or $\cF$. However unlike in the non-adaptive scenario, after each use of a channel we are allowed to perform an adaptive trace-preserving completely positive map $\cN \in \TPCP(B \otimes R, A \otimes R)$ before we perform a measurement at the end. More precisely, for an arbitrary state $\rho^{(0)}_{AR} \in \St(A \otimes R)$ we create $\rho_{AR}^{(k)}=(\cN \circ\, \cG)(\rho_{AR}^{(k-1)})$ for $k=1,\ldots,n$. Finally we perform a measurement on $\rho_{AR}^{(n)}$ and based on the outcome try to guess if $\cG= \cE$ or $\cG = \cF$.
The strategy is depicted in Figure~\ref{fig_adaptive}. The asymptotically optimal rate of the type-II error exponent for this strategy is given by $D^{A}(\cE \| \cF)$~\cite[Theorem 6]{Wang2019}.
\end{enumerate}

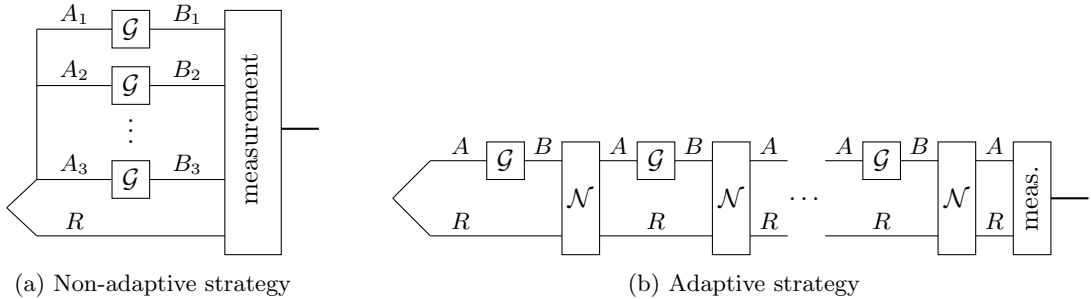
\begin{figure}[!h]
\centering
\begin{subfigure}[b]{0.25\textwidth}
\begin{tikzpicture}
\draw (0,0) -- (1,0);
\draw (1,-0.25) rectangle (1.5,0.25);
\node at (1.25,0) {$\cG$};
\node at (0.5,0.2) {\small $A_1$};
\node at (2,0.2) {\small$B_1$};
\draw (1.5,0) -- (2.5,0);
\draw (0,-0.75) -- (1,-0.75);
\draw (1,-1) rectangle (1.5,-0.5);
\node at (1.25,-0.75) {$\cG$};
\node at (0.5,0.2-0.75) {\small$A_2$};
\node at (2,0.2-0.75) {\small$B_2$};
\draw (1.5,-0.75) -- (2.5,-0.75);
\draw (0,-2) -- (1,-2);
\draw (1,-2.25) rectangle (1.5,-1.75);
\node at (1.25,-2) {$\cG$};
\node at (0.5,-1.8) {\small$A_3$};
\node at (2,-1.8) {\small$B_3$};
\draw (1.5,-2) -- (2.5,-2);
\node at (1.25,-1.25) {$\vdots$};
\draw (0,-2) -- (0,0);
\node at (0.5,-2.55) {\small$R$};
\draw (0,-2.75) -- (2.5,-2.75);
\draw (-0.4,-2.375) -- (0,-2.75);
\draw (-0.4,-2.375) -- (0,-2.0);
\draw (2.5,-3) rectangle (3.25,0.25);
\node[rotate=90] at (2.825,-1.325) {measurement};
\draw[thick] (3.25,-1.325) -- (3.75,-1.325);
\end{tikzpicture}
\subcaption{Non-adaptive strategy}
\label{fig_parallel}
\end{subfigure}
\hspace{10mm}
\begin{subfigure}[b]{0.6\textwidth}
\begin{tikzpicture}
\draw (0,0) -- (0.5,0.5);
\draw (0,0) -- (0.5,-0.5);
\draw (0.5,0.5) -- (1.25,0.5);
\draw (0.5,-0.5) -- (2.25,-0.5);
\draw (1.25,0.25) rectangle (1.75,0.75);
\node at (1.5,0.5) {$\cG$};
\node at (2,0.7) {\small$B$};
\node at (0.9,0.7) {\small$A$};
\node at (0.9,-0.3) {\small$R$};
\draw (1.75,0.5) -- (2.25,0.5);
\draw (2.25,0.75) rectangle (2.75,-0.75);
\node at (2.5,0) {$\cN$};
\node at (3,0.7) {\small$A$};
\node at (4,0.7) {\small$B$};
\node at (3.5,-0.3) {\small$R$};
\draw (2.75,-0.5) -- (4.25,-0.5);
\draw (2.75,0.5) -- (3.25,0.5);
\draw (3.75,0.5) -- (4.25,0.5);
\draw (3.25,0.25) rectangle (3.75,0.75);
\node at (5,0.7) {\small$A$};
\node at (5,-0.3) {\small$R$};
\node at (3.5,0.5) {$\cG$};
\draw (4.25,0.75) rectangle (4.75,-0.75);
\node at (4.5,0) {$\cN$};
\draw (4.75,-0.5) -- (5.25,-0.5);
\draw (4.75,0.5) -- (5.25,0.5);
\node at (5.5,0) {\ldots};
\draw (5.75,-0.5) -- (7.25,-0.5);
\draw (5.75,0.5) -- (6.25,0.5);
\draw (6.75,0.5) -- (7.25,0.5);
\node at (6,0.7) {\small$A$};
\node at (7,0.7) {\small$B$};
\node at (6.5,-0.3) {\small$R$};
\draw (6.25,0.25) rectangle (6.75,0.75);
\node at (6.5,0.5) {$\cG$};
\draw (7.25,0.75) rectangle (7.75,-0.75);
\node at (7.5,0) {$\cN$};
\node at (8,0.7) {\small$A$};
\node at (8,-0.3) {\small$R$};
\draw (7.75,-0.5) -- (8.25,-0.5);
\draw (7.75,0.5) -- (8.25,0.5);
\draw (8.25,0.75) rectangle (8.75,-0.75);
\node[rotate=90] at (8.5,0) {meas.};
\draw[thick] (8.75,0) -- (9.25,0);
\end{tikzpicture}
\subcaption{Adaptive strategy}
\label{fig_adaptive}
\end{subfigure}
\caption{Two general protocols for non-adaptive and adaptive strategies for the task of channel discrimination. The channel $\cG$ is either $\cE$ or $\cF$ and the task is to distinguish between these two cases.}
\label{fig_channel_disc}
\end{figure}
Because a non-adaptive strategy can be viewed as a particular instance of an adaptive strategy~\cite{chiribella08} it follows that adaptive strategies are clearly as powerful as non-adaptive ones, which in technical terms means
\begin{align*}
D^{\mathrm{reg}}(\cE \| \cF)
\leq D^{A}(\cE \| \cF) \, .
\end{align*}
It has been an open question if adaptive strategies can be more powerful for the task of asymptotic quantum channel discrimination. For some special classes of channels, such as classical and classical-quantum channels it has been shown that adaptive protocols cannot improve the error rate for asymmetric channel discrimination~\cite{hayashi2009discrimination,berta18}. Corollary~\ref{corollary_collapse} now proves that this is the case for all quantum channels because
 \begin{align*}
D^{\mathrm{reg}}(\cE \| \cF)
= D^{A}(\cE \| \cF) \, .
\end{align*}

We note that this is surprising for various reasons. In the symmetric Chernoff setting~\cite{duan09,harrow10,duan16} adaptive protocols offer an advantage over non-adaptive ones. Furthermore, in the non-asymptotic setting adaptive protocols also outperform non-adaptive strageties~\cite{duan09,harrow10,Puzzuoli2017}.

\paragraph{Acknowledgements.} We thank Mario Berta for discussions about asymptotic quantum channel discrimination. We also thank Mark Wilde for presenting the question if adaptive strategies can outperform non-adaptive strategies for the task of asymmetric channel discrimination as an open problem in the Banff workshop on ``Algebraic and Statistical ways into Quantum Resource Theories''~\cite{banff19}.
KF is supported by the University of Cambridge Isaac Newton Trust Early Career grant RG74916.
OF acknowledges support from the French ANR project ANR-18-CE47-0011 (ACOM).
DS and RR acknowledge support from the Swiss National Science Foundation via the NCCR QSIT as well as project No.~200020\_165843 and Air Force Office of Scientific Research grant FA9550-19-1-0202.


\bibliographystyle{arxiv_no_month}
\bibliography{bibliofile}

\begin{thebibliography}{10}

\bibitem{ando94}
T.~Ando and F.~Hiai.
\newblock Log majorization and complementary {G}olden-{T}hompson type
  inequalities.
\newblock {\em Linear Algebra and its Applications}, 197:113 -- 131, 1994.
\newblock
  \texttt{\href{http://dx.doi.org/10.1016/0024-3795(94)90484-7}{DOI:\,10.1016/0024-3795(94)90484-7}}.

\bibitem{berta18}
M.~Berta, C.~Hirche, E.~Kaur, and M.~M. Wilde.
\newblock Amortized channel divergence for asymptotic quantum channel
  discrimination, 2018.
\newblock \href{https://arxiv.org/pdf/1808.01498}{arXiv:1808.01498}.

\bibitem{bhatia_book}
R.~Bhatia.
\newblock {\em Matrix Analysis}.
\newblock Springer, 1997.
\newblock
  \texttt{\href{http://dx.doi.org/10.1007/978-1-4612-0653-8}{DOI:\,10.1007/978-1-4612-0653-8}}.

\bibitem{chiribella08}
G.~Chiribella, G.~M. D'Ariano, and P.~Perinotti.
\newblock Memory effects in quantum channel discrimination.
\newblock {\em Phys. Rev. Lett.}, 101:180501, 2008.
\newblock
  \texttt{\href{http://dx.doi.org/10.1103/PhysRevLett.101.180501}{DOI:\,10.1103/PhysRevLett.101.180501}}.

\bibitem{cover}
T.~M. Cover and J.~A. Thomas.
\newblock {\em Elements of Information Theory}.
\newblock Wiley Interscience, 2006.
\newblock
  \texttt{\href{http://dx.doi.org/10.1002/047174882X}{DOI:\,10.1002/047174882X}}.

\bibitem{datta09}
N.~{Datta}.
\newblock Min- and max-relative entropies and a new entanglement monotone.
\newblock {\em IEEE Transactions on Information Theory}, 55(6):2816--2826,
  2009.
\newblock
  \texttt{\href{http://dx.doi.org/10.1109/TIT.2009.2018325}{DOI:\,10.1109/TIT.2009.2018325}}.

\bibitem{duan09}
R.~Duan, Y.~Feng, and M.~Ying.
\newblock Perfect distinguishability of quantum operations.
\newblock {\em Phys. Rev. Lett.}, 103:210501, 2009.
\newblock
  \texttt{\href{http://dx.doi.org/10.1103/PhysRevLett.103.210501}{DOI:\,10.1103/PhysRevLett.103.210501}}.

\bibitem{duan16}
R.~{Duan}, C.~{Guo}, C.~{Li}, and Y.~{Li}.
\newblock Parallel distinguishability of quantum operations.
\newblock In {\em 2016 IEEE International Symposium on Information Theory
  (ISIT)}, pages 2259--2263, 2016.
\newblock
  \texttt{\href{http://dx.doi.org/10.1109/ISIT.2016.7541701}{DOI:\,10.1109/ISIT.2016.7541701}}.

\bibitem{kun19}
K.~Fang and H.~Fawzi.
\newblock Geometric {R}\'enyi divergence and its applications in quantum
  channel capacities, 2019.
\newblock to appear on the arXiv.

\bibitem{cvxquad}
H.~Fawzi, J.~Saunderson, and P.~A. Parrilo.
\newblock Semidefinite approximations of the matrix logarithm.
\newblock {\em Foundations of Computational Mathematics}, 2018.
\newblock Package cvxquad at \url{https://github.com/hfawzi/cvxquad}.

\bibitem{FR14}
O.~Fawzi and R.~Renner.
\newblock Quantum conditional mutual information and approximate {M}arkov
  chains.
\newblock {\em Communications in Mathematical Physics}, 340(2):575--611, 2015.
\newblock
  \texttt{\href{http://dx.doi.org/10.1007/s00220-015-2466-x}{DOI:\,10.1007/s00220-015-2466-x}}.

\bibitem{harrow10}
A.~W. Harrow, A.~Hassidim, D.~W. Leung, and J.~Watrous.
\newblock Adaptive versus nonadaptive strategies for quantum channel
  discrimination.
\newblock {\em Phys. Rev. A}, 81:032339, 2010.
\newblock
  \texttt{\href{http://dx.doi.org/10.1103/PhysRevA.81.032339}{DOI:\,10.1103/PhysRevA.81.032339}}.

\bibitem{hayashi2009discrimination}
M.~Hayashi.
\newblock Discrimination of two channels by adaptive methods and its
  application to quantum system.
\newblock {\em IEEE Transactions on Information Theory}, 55(8):3807--3820,
  2009.
\newblock
  \texttt{\href{http://dx.doi.org/10.1109/TIT.2009.2023726}{DOI:\,10.1109/TIT.2009.2023726}}.

\bibitem{HP93}
F.~Hiai and D.~Petz.
\newblock The {G}olden-{T}hompson trace inequality is complemented.
\newblock {\em Linear Algebra and its Applications}, 181:153 -- 185, 1993.
\newblock
  \texttt{\href{http://dx.doi.org/10.1016/0024-3795(93)90029-N}{DOI:\,10.1016/0024-3795(93)90029-N}}.

\bibitem{leditzky2018approaches}
F.~Leditzky, E.~Kaur, N.~Datta, and M.~M. Wilde.
\newblock Approaches for approximate additivity of the {H}olevo information of
  quantum channels.
\newblock {\em Phys. Rev. A}, 97:012332, 2018.
\newblock
  \texttt{\href{http://dx.doi.org/10.1103/PhysRevA.97.012332}{DOI:\,10.1103/PhysRevA.97.012332}}.

\bibitem{Puzzuoli2017}
D.~Puzzuoli and J.~Watrous.
\newblock Ancilla dimension in quantum channel discrimination.
\newblock {\em Annales Henri Poincar{\'e}}, 18(4):1153--1184, 2017.
\newblock
  \texttt{\href{http://dx.doi.org/10.1007/s00023-016-0537-y}{DOI:\,10.1007/s00023-016-0537-y}}.

\bibitem{renner_phd}
R.~Renner.
\newblock Security of quantum key distribution.
\newblock {\em PhD thesis, ETH Zurich}, 2005.
\newblock \href{http://arxiv.org/abs/quant-ph/0512258}{arXiv:0512258}.

\bibitem{Sutter_book}
D.~Sutter.
\newblock {\em Approximate Quantum Markov Chains}.
\newblock Springer International Publishing, 2018.
\newblock
  \texttt{\href{http://dx.doi.org/10.1007/978-3-319-78732-9{\_}5}{DOI:\,10.1007/978-3-319-78732-9{\_}5}}.

\bibitem{Sutter2018}
D.~Sutter and R.~Renner.
\newblock Necessary criterion for approximate recoverability.
\newblock {\em Annales Henri Poincar{\'e}}, 19(10):3007--3029, 2018.
\newblock
  \texttt{\href{http://dx.doi.org/10.1007/s00023-018-0715-1}{DOI:\,10.1007/s00023-018-0715-1}}.

\bibitem{tomamichel_phd}
M.~Tomamichel.
\newblock A framework for non-asymptotic quantum information theory.
\newblock {\em PhD thesis, ETH Zurich}, 2012.
\newblock \href{http://arxiv.org/abs/1203.2142}{arXiv:1203.2142}.

\bibitem{marco_book}
M.~Tomamichel.
\newblock {\em Quantum Information Processing with Finite Resources}, volume~5
  of {\em SpringerBriefs in Mathematical Physics}.
\newblock Springer, 2015.
\newblock
  \texttt{\href{http://dx.doi.org/10.1007/978-3-319-21891-5}{DOI:\,10.1007/978-3-319-21891-5}}.

\bibitem{tomamichel09}
M.~{Tomamichel}, R.~{Colbeck}, and R.~{Renner}.
\newblock A fully quantum asymptotic equipartition property.
\newblock {\em IEEE Transactions on Information Theory}, 55(12):5840--5847,
  2009.
\newblock
  \texttt{\href{http://dx.doi.org/10.1109/TIT.2009.2032797}{DOI:\,10.1109/TIT.2009.2032797}}.

\bibitem{TCR10}
M.~Tomamichel, R.~Colbeck, and R.~Renner.
\newblock Duality between smooth min- and max-entropies.
\newblock {\em IEEE Transactions on Information Theory}, 56(9):4674--4681,
  2010.
\newblock
  \texttt{\href{http://dx.doi.org/10.1109/TIT.2010.2054130}{DOI:\,10.1109/TIT.2010.2054130}}.

\bibitem{tomamichel2013hierarchy}
M.~{Tomamichel} and M.~{Hayashi}.
\newblock A hierarchy of information quantities for finite block length
  analysis of quantum tasks.
\newblock {\em IEEE Transactions on Information Theory}, 59(11):7693--7710,
  2013.
\newblock
  \texttt{\href{http://dx.doi.org/10.1109/TIT.2013.2276628}{DOI:\,10.1109/TIT.2013.2276628}}.

\bibitem{Wang2019}
X.~Wang and M.~M. Wilde.
\newblock Resource theory of asymmetric distinguishability for quantum
  channels, 2019.
\newblock \href{http://arxiv.org/abs/1907.06306}{arXiv:1907.06306}.

\bibitem{banff19}
M.~M. Wilde.
\newblock Open problems session at the {B}anff workshop about algebraic and
  statistical ways into quantum resource theories, 2019.
\newblock
  \url{https://www.birs.ca/workshops/2019/19w5120/files/19w5120-OpenProblems-20190723.mp4}.

\end{thebibliography}

\end{document}